\newcommand{\name}{RFocus\xspace}
\newtheorem{theorem}{Theorem}
\newtheorem{lemma}{Lemma}
\title{\Large \bf \name{}: Practical Beamforming for Small Devices} 
\date{}
\author{
{\rm Venkat Arun and Hari Balakrishnan}\\
MIT, CSAIL
}
\newcommand{\va}[1]{\textcolor[rgb]{1,0,0}{#1 - VA}}
\newcommand{\hb}[1]{\textcolor[rgb]{0,1,0}{#1 - HB}}
\newcommand{\outline}[1]{\textcolor[rgb]{.4,.4,.4}{\large{\bf Outline:} #1}}
\newcommand{\va}[1]{}
\newcommand{\hb}[1]{}
\newcommand{\outline}[1]{}
\begin{document}

\maketitle

\begin{abstract}

To reduce transmit power, increase throughput, and improve communication range, radio systems---such as IoT sensor networks, Wi-Fi and cellular networks---benefit from the ability to direct their signals, to ensure that more of the transmitted power reaches the receiver. Many modern systems beamform with antenna arrays for this purpose. However, a radio's ability to direct its signal is fundamentally limited by its size. Unfortunately practical challenges limit the size of modern radios, and consequently, their ability to beamform. In many settings, radios on devices must be small and inexpensive; today, these settings are unable to benefit from high-precision beamforming. 

To address this problem, we introduce {\em RFocus}, which moves beamforming functions from the radio endpoints to the environment. RFocus includes a two-dimensional surface with a rectangular array of simple elements, each of which functions as an RF switch. Each element either lets the signal through or reflects it. The surface does not emit any power of its own. The state of the elements is set by a software controller to maximize the signal strength at a receiver, with a novel optimization algorithm that uses signal strength measurements from the receiver. The RFocus surface can be manufactured as an inexpensive thin wallpaper, requiring no wiring. This solution requires only a method to communicate received signal strengths periodically to the RFocus controller. Our prototype implementation improves the median signal strength by $10.5\times$, and the median channel capacity by $2.1\times$. 

\end{abstract}
\section{Introduction}


Many radio systems use directional or sectorized antennas and beamforming to improve the throughput or range of a wireless communication link. Beamforming ensures that a larger fraction of transmitted energy reaches the intended receiver, while reducing unintended interference. It is well known that a radio with many antennas spread densely over a large area can fundamentally beamform better than a smaller radio~(\S\ref{s:size-tradeoff}).

However, there are many practical challenges to making radio systems with large antenna arrays. First devices such as IoT sensors and handhelds must be small in size. Second, connecting each antenna in an array to full-fledged radio transmit/receive circuitry increases cost and power. Third, large, bulky systems are hard to deploy, even in infrastructure base stations or access points. 

To address these challenges and achieve the equivalent of a large number of antennas, we propose {\em \name{}}. \name is a {\em programmable mirror/lens} placed in the environment that configures itself to direct a signal from a transmitter to a receiver. This approach moves the task of beamforming from the transmitter to the environment. Any device in the vicinity can reap the benefits of \name{}'s size, without itself being large or consuming additional energy. 

\name{} is made up of thousands of simple elements organized in a rectangular array. To minimize cost and power-consumption, each element only contains a single 2-way RF switch, which is as inexpensive as a passive RFID tag. It can be manufactured as a thin, flexible sheet that can be pasted on walls as (painted) wallpaper. It can be manufactured to be battery and wire-free, powered and controlled with RF signals.

Each element in the \name{} surface can be in one of two states. When ``on'', the signal is reflected; otherwise, the signal passes through. \name{} doesn't emit any power of its own. Each receiver periodically sends measurements to a \name{} controller. The controller uses these to configure the \name{} surface to maximize signal strength between the pairs. To do so, the controller has a training phase, where it  configures test states on the surface, and monitors the changes in the reported measurements. Once it has enough data, it uses an optimization algorithm to set a state that maximizes the endpoints' objectives.


This controller's optimization algorithm solves three key challenges. First, indoor environments exhibit complex multi-path. Therefore the optimal configuration might not correspond to directing the signal along a single direction. Second, one way to measure the effect of each element on the channel is to vary it individually, and measure the difference. However, the effect of an individual element is miniscule, making it very hard to measure (\S\ref{s:eval:micro:measure}). Third, drift in carrier frequency offset (CFO drift) corrupts an already weak signal, making it hard to measure phase. \va{Do we mention that commodity devices usually don't report phase?}


Our prototype has 3,720 inexpensive\footnote{At scale, each of the antenna elements is on the order of a few cents or less.} antennas on a 6 square-meter surface. We believe this configuration may be the largest number of antennas ever used to improve communication links.  

\name{} can work both as a mirror or a lens, with the controller seamlessly choosing the right mode. That is, radio endpoints can be on the same side of the surface, or on opposite sides. Further, the optimization organically prioritizes more important elements, which makes \name{} robust to element failure.

\if 0
A controller determines the state of each element to maximize the signal strength between a given pair of radios. To reap the benefits of a \name{} surface in the vicinity, a radio endpoint should periodically send RSSI (Received Signal Strength Indicator) measurements to the controller. The controller tries various random configurations of the surface, looks at the impact it has on users' reported RSSI measurements and uses an algorithm to find an optimal configuration for each pair of endpoints. Before transmitting a packet, an endpoint should request the controller to optimize the surface for that pair to obtain benefits. For instance, an IoT reader could do this before probing a power-constrained device for data. Alternately, the controller could be integrated into a WiFi Access Point, which optimizes the channel for each downlink, or high-traffic uplink nodes.
\fi

\va{Mobility and fast evolution is out of scope}

\va{Good place to mention that \name{} is unlikely to hurt performance? We can do an eval on this. Manikanta had this concern.}

The contributions of this paper include:
\begin{enumerate}

\item The \name{} controller, incorporating three key ideas. First, to handle complex multipath, \name{} exploits the fact that the \name{} surface design is approximately linear. Second, it modulates all elements at once to boost the effect of the \name{} surface on the channel, hence making the change large enough to be measurable. Third, it relies only on signal strength measurements, sidestepping difficulties in measuring phase. Under realistic assumptions, we prove that the controller finds a solution that is within $2\times$ of the optimal.

\item The \name{} surface, which has several desirable properties. First, unlike prior work,~\cite{laia-nsdi,laia-hotnets} to make our reflectors inexpensive, we use just two states. We prove that this achieves at-least $1 / \pi$ of the performance of reflectors that have infinitely many states (\S\ref{s:phys:num-states}). Second, our reflectors' sizes are comparable to a wavelength, which considerably simplifies their design. There are diminishing returns to making them much smaller than a wavelength (\S\ref{s:phys:pixel-size}). Third, our design is both area-efficient and approximately linear~(\S\ref{s:eval:micro}). To achieve this, we exploit the well-known phenomenon that waves ignore details that are much smaller than a wavelength~(\S\ref{s:phys:design}). Finally, it works across a wide range of frequencies.

\item Experiments, which show that in a typical indoor office environment, \name{} achieves a median $10.5\times$ improvement in signal strength and $2\times$ improvement in channel capacity.

\end{enumerate}

\hb{the third contribution above is fine to start but the three points aren't too clear. Moreover, you really need some experimental result}

\section{Related Work}

LAIA~\cite{laia-nsdi,laia-hotnets} is a recent project that helps endpoints whose line-of-sight path is blocked by a wall. LAIA deploys elements connected by a wire going through holes in the wall, taking energy collected from one side of the wall to the other. Thus individual LAIA elements have a much larger impact on the channel than \name{} reflectors. This allows LAIA to function with fewer elements than \name{}. However, LAIA's benefits are limited to receivers which are blocked from the sender by the wall which its elements traverse. By contrast to LAIA, our goal is to design a system that is not just limited to such receivers. To achieve this goal, we solve the challenge of measuring the effect of the \name{} reflectors. Additionally, our mirrors use a 2-way RF-switch~\cite{rf-switch} which is $20\times$ cheaper than the phase-shifters used by LAIA~\cite{laia-phase-shifter}. We prove that \name{} achieves a channel improvement that is at-least $1 / \pi$ as that of a surface whose elements have infinite states~(\S\ref{s:phys:num-states}). 

The idea of using passive controllable reflectors in the environment to improve communication links has been explored before~\cite{ee-1-with-exp,ee-2,ee-3-theory,hypersurface,hypersurface-2,hypersurface-3}. This prior research is, however, is in theory or simulation, except for one paper providing a preliminary experimental result~\cite{ee-1-with-exp}. To the best of our knowledge, \name{} is the first large-scale real-world prototype of such a system. 

One line of theoretical work~\cite{hypersurface,hypersurface-2,hypersurface-3} explores using metamaterial elements to create antenna arrays, where each ``pixel'' is much smaller than a wavelength. This approach offers fine-grained control over the electric field at the surface. This work designs sophisticated algorithms to solve Maxwell's equations to reason about the surface. We argue, however, that at distances greater than a few wavelengths from the surface, such fine-grained control gives only incremental benefits~\S\ref{s:phys:pixel-size}. Hence we adopt a simpler design with larger, non-metamaterial pixels.

Another related line of work improves wireless coverage by analyzing the indoor space, and custom-designing a 3D reflector for that space. When 3D-printed and placed behind the access point (AP), the reflector reflects energy in specific directions to maximize signal strength at previously uncovered areas\cite{3d-print-reflect-1,3d-print-reflect-2}. Once deployed, this reflector has a very low operational cost; it is just a passive metal-coated object. But a new reflector needs to be designed for each new location and whenever the space changes or the AP is moved. Further, one solution has to be designed for all pairs of endpoints, whereas the \name{} design can dynamically design a new pattern for each pair of endpoints at runtime.

Range extenders are an alternate technique to increase signal strength at the receiver. However, by retransmitting each packet, they increase interference and waste transmission opportunities. By precisely focusing energy already available, \name{} decreases interference while increasing signal strength.

Reconfigurable antennas~\cite{reconf-ant-review} and reflectarray antennas~\cite{refarr-ant-review} have RF switches and phase shifters, which allow them to dynamically change their characteristics such as operating frequency, input impedance and directionality. These approaches focus on modifying an antenna to get better characteristics. By contrast, we leave the transmit and receive antennas unmodified, instead modifying the environment to improve communication for all nearby devices. Thus one set of antennas in the environment (\name surface) can serve multiple devices, even though the devices are too small to have a large antenna array.

\section{Overview}
\begin{figure}
    \centering
    \includegraphics[width=\linewidth]{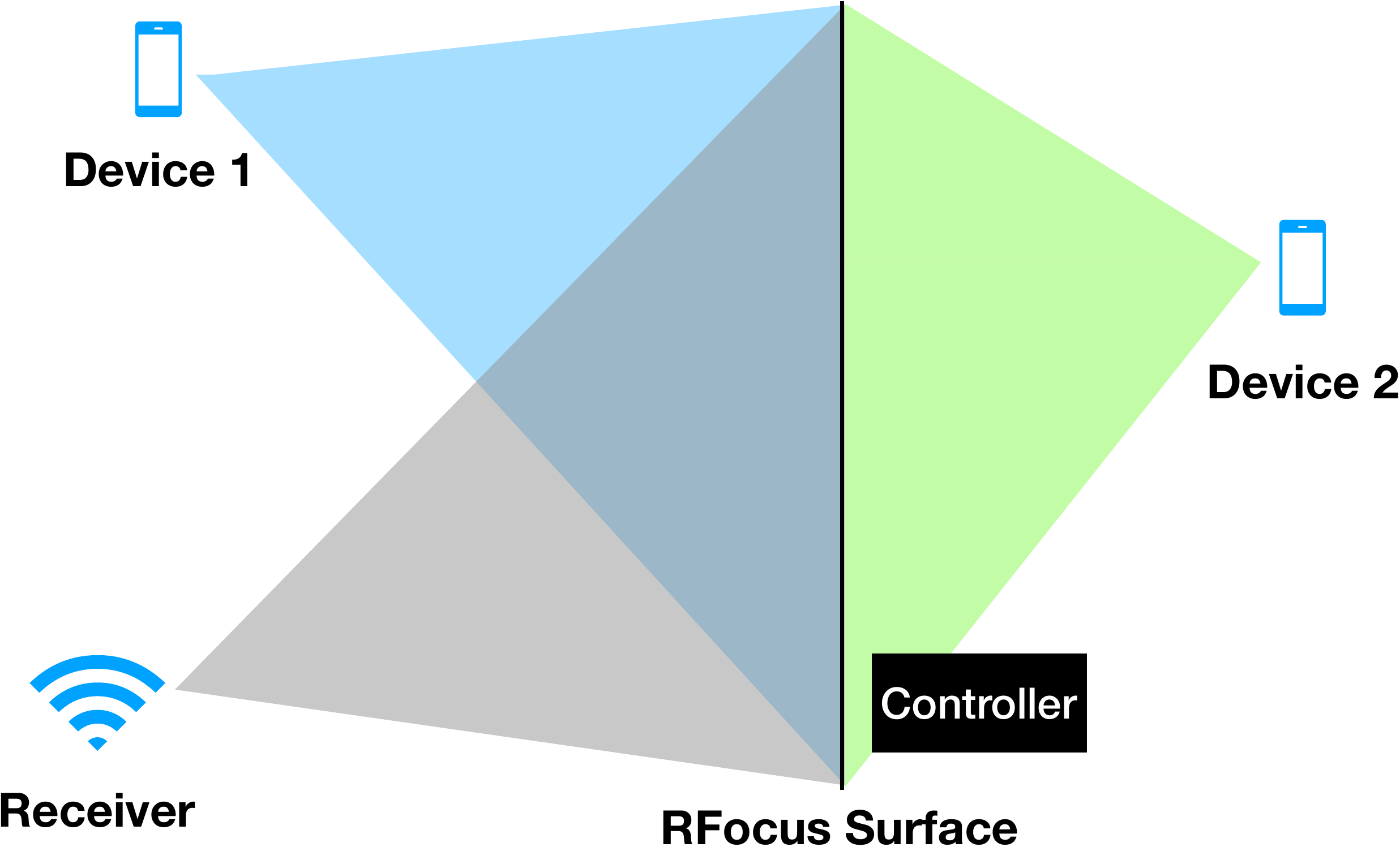}
    \caption{\name{} consists of a large passive reflecting surface and a controller that configures how the surface reflects signals by setting the ``on'' or ``off'' state of each RF switch on the surface. Endpoints use \name{} for beamforming transmissions by periodically sending received signal strength measurements to the controller. The controller uses this information in an optimization algorithm that determines the state of each switch on the surface to focus the transmitter's signal at the intended receiver. \name{} does not need to know where the endpoints are, and can improve SNR when the endpoints are on the same side of the surface (like a mirror) or on opposite sides (like a lens). Once configured, it can switch between different configurations in $\approx 1$ ms, allowing different pairs of endpoints to time-share the surface's beamforming abilities.}
    \label{fig:overview}
\end{figure}
\begin{figure}
\centering
\includegraphics[width=\linewidth]{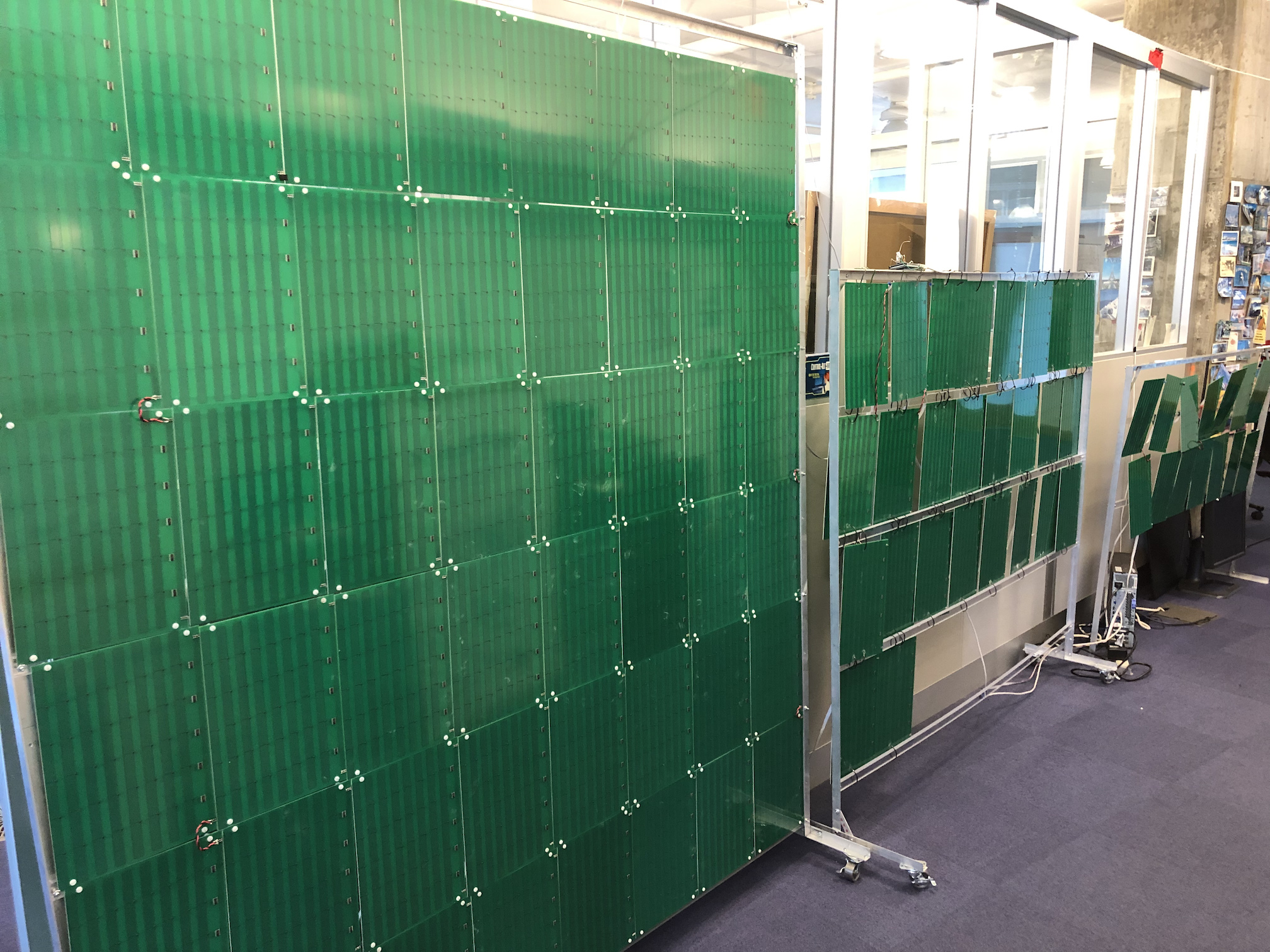}
\caption{Our prototype of the \name surface.}
\label{fig:photo}
\end{figure}

\name{} seeks to increase the received signal strength of a transmitter's signal. A transmitter can take advantage of this capability by reducing its transmit power, which helps reduce interference between nearby devices, and helps battery-operated devices conserve energy. Communicating endpoints may also use the boost in signal strength to achieve higher throughput or range.

Figure~\ref{fig:overview} shows how \name{} enables multiple communicating endpoints to exploit the \name{} surface's large area (a large area is essential for precisely directing radio signals, as explained in~\S\ref{s:size-tradeoff}. Figure~\ref{fig:photo} is a picture of our prototype. 

To model the system, we treat each element in the \name{} surface as contributing a phase to the received signal~(\S\ref{s:design:model}). \name{}'s behavior toward radio signals is controlled by which elements are turned on, as this controls where the paths interfere constructively versus destructively. For instance, we can programmatically control the angle at which the signal is reflected or the point at which it is focused. The controller uses signal-strength measurements from the receiver to determine which elements to turn on, so that all the paths interfere constructively at the receiver. To do so, it sets random configurations on the surface, and monitors how the signal strength reported by the endpoints varies, and uses an efficient majority-voting algorithm to converge to a configuration that is within $2\times$ of optimal. 
\section{Background}

\subsection{System Model and Notation}
\label{s:design:model}
In any environment, there are multiple paths between any two antennas. For a narrow-band signal of wavelength $\lambda$, the effect of each path can be represented by a complex number. It is a function of the path length and any reflectors it encounters. The net effect of the channel is the sum of the effects of all the paths. A subset of the paths pass via each of the $N$ elements on the RFocus surface. Denote the channel effect of the elements by $c_1, \ldots, c_N$. We combine all the paths {\it not} going via \name{} into one complex number $c_E$ ($E$ for environment). The net channel is therefore:
\begin{equation}
\label{eqn:phy-model}
h = c_E + \sum_{i=1}^N \alpha_i c_i
\end{equation}
Here $\alpha_i$ represents the amplitude change and phase shift introduced by element $i$. \name{} controls the channel, $h$, by adjusting $\alpha_i$. $c_E$ and $c_i$ are functions of the path lengths. $\alpha_i$ is a function of the state of the $i^{th}$ element and its neighbors, the shape of the antennas composing the element, and the angles at which the path enters and leaves the $i^{th}$ element. Since the elements are passive and do not contain an energy source, $|\alpha_i| \le 1$. If we had full control over each $\alpha_i$, then to maximize the channel strength, $|h|$, we would set $\alpha_i = \frac{c_E}{|c_E|}\frac{c_i^*}{|c_i|}$, where $c_i^*$ denotes conjugation. We get:

\begin{equation}
\label{eqn:phy-model-opt}
|h| = \left|c_E + \frac{c_E}{|c_E|}\left(\sum_{i=1}^N|c_i|\right)\right| = |c_E| + \sum_i |c_i|
\end{equation}

This assignment maximizes $|h|$. However, we do not have full control over $\alpha_i$. Our elements can only be in one of two states, on or off. If we assume that $\alpha_i$ is a function of only the $i^{th}$ element's state and not that of its neighbors, then we can write the channel as:
\begin{equation}
\label{eqn:base-model}
h = h_Z + \sum_{i=1}^N b_i h_i
\end{equation}
Section~\ref{s:eval:micro:linearity} shows that this is a good approximation.

Here $b_i \in \{0, 1\}$ denotes whether the $i^{th}$ element is on or off. $h_Z$ is the channel when all elements are off, and $h_i$ is the effect of turning the $i^{th}$ element on. Here, we have folded the complexities of $\alpha_i$ into $c_i$ to get $h_i$.  We prove that having the ability to set any $b_i \le 1$ gives only a $\pi \approx 3.14$ factor advantage in optimizing $|h|$ (or a $\pi^2$ factor advantage in $|h|^2$, the energy) over being restricted to pick $b_i \in \{0, 1\}$ (\S\ref{s:phys:num-states}).

\subsection{How Size Helps Communication}
\label{s:size-tradeoff}

\newcommand{\appropto}{\mathrel{\vcenter{
  \offinterlineskip\halign{\hfil$##$\cr
    \propto\cr\noalign{\kern2pt}\sim\cr\noalign{\kern-2pt}}}}}

In this section, we discuss some well-known results that illustrate why and how much the size of the antenna array helps.

\paragraph{Quadratic growth.} From equation~\ref{eqn:phy-model} we can see that, when optimized, $|h|$ grows linearly with the number of elements, $N$, if the $|h_i|$ values don't fall as $N$ increases. Hence, the {\it energy} in the signal, $|h|^2$, grows as $O(N^2)$. The whole is thus greater than the sum of its parts! This property holds because the phases of all paths are aligned, and energy that would have otherwise gone elsewhere is now focused on the receiver. If the phases are random, $|h|$ only grows as $O(\sqrt{N})$ (due to the central limit theorem), and the energy hence grows as $O(N)$, which matches our intuition about everyday objects which do not align phase. We show that even with our system, where we can't fully align all phases, and our elements are limited to having 2-states, the $O(N^2)$ growth still holds (\S\ref{s:phys:num-states}).

\begin{figure}
\includegraphics[width=\linewidth]{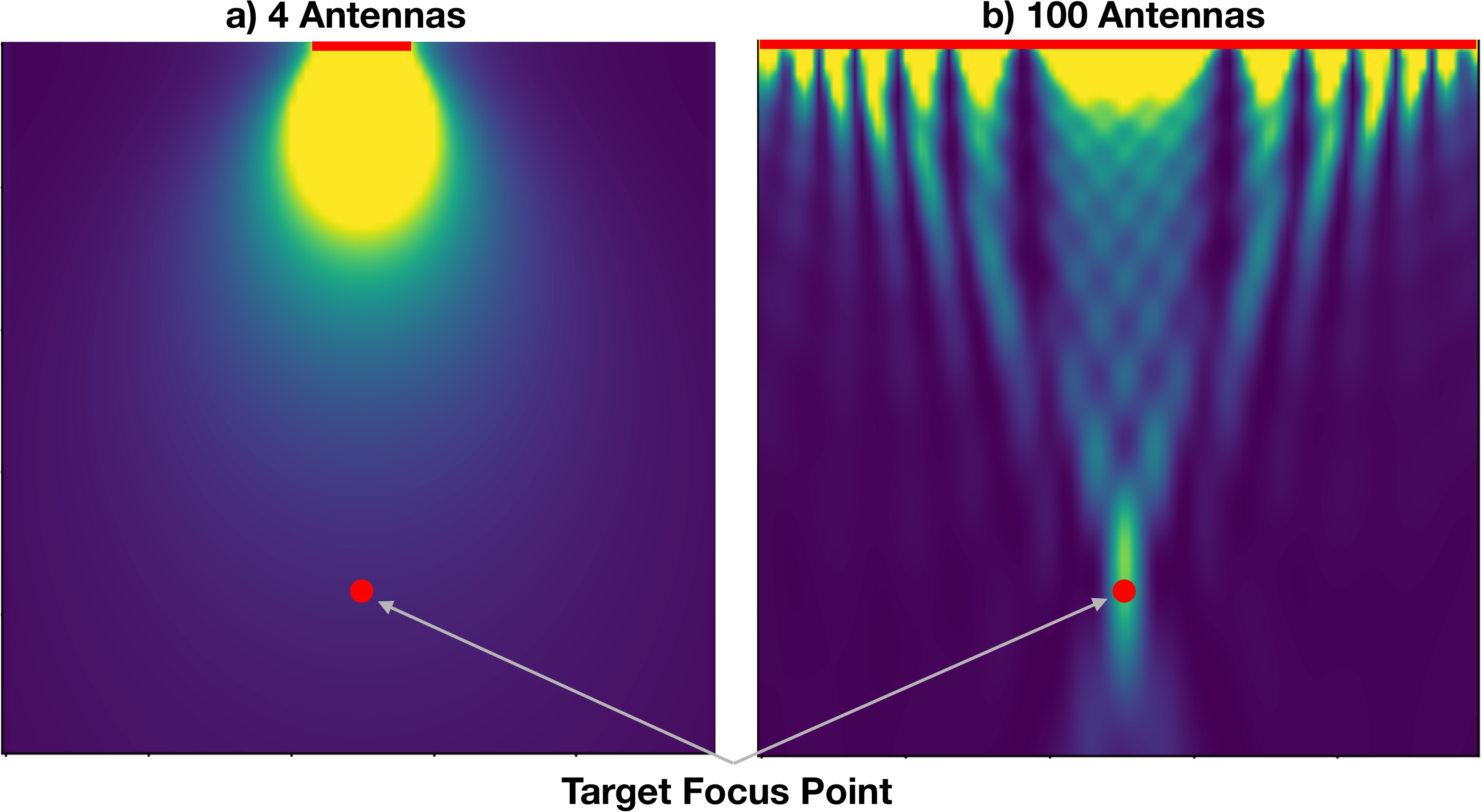}
\caption{A simulation of how signal strength is distributed when an antenna array tries to maximize signal strength at the target point. The antennas are in the highlighted regions. With 4 antennas, the signal is directional, but quickly begins to spread out. The 100 antennas, subtend a large angle at the target, and are hence able to focus energy there, avoiding attenuation due to spreading. For clarity, the very high signal strength near the antennas is not shown.}
\label{fig:diffraction-limit}
\end{figure}

\paragraph{The diffraction limit.} We now use well known results to quantitatively understand how the size of the \name{} surface allows it to focus on a small area. The Abbe diffraction limit states that, if our surface has an area $A$, and focuses energy at a point at distance $d$ away from it, the spot onto which the energy is focused will have an area $a = k \lambda^2(1 + 4\frac{d^2}{A})$, where $\lambda$ is the wavelength.~\va{Remember to explain in appendix} $k$ is a proportionality constant, conventionally set to $0.5$; it is a function of how well we want the energy to be confined within $a$. The energy is spread through the area $a$, hence the energy available to the receiver is $O(\Omega / a)$, where $\Omega$ is the solid angle subtended by \name{} on the transmitter, and accounts for the amount of transmitted energy incident on the surface. Hence \name{} works best when the transmitter (or receiver, since the channel is symmetric) is close (say at distance $d'$) to the surface, and therefore $\Omega$ is large. Now $\Omega \propto \arctan^2{(\sqrt{A}/d')}$, which when $\Omega$ isn't too large (say $< \pi / 2$), is $O(A)$.

Note, $A / d^2$ is proportional to the angle subtended by the surface on the receiver. We get two regimes depending on how large this angle is. If the radio is far away from the surface (i.e. $d^2 \gg A$), then $a \appropto d^2 / A$ and energy falls as $O(A\Omega / d^2)$. This is still a $1 / d^2$ fall, but the constant is improved by a factor of $A\Omega$. If the radio is closer, then $d^2 \sim A$, and the first term dominates. In this regime, $a$ can be made quite small, on the order of a few $\lambda^2$. Hence almost all of the transmitted energy can be incident on the receiver.

In traditional beamforming, $A$ is typically small, hence $d^2 \gg A$ and we are always in the first regime where the signal experiences a $1 / d^2$ attenuation as it spreads out. The difference between the two regimes is illustrated in Figure~\ref{fig:diffraction-limit}.

\section{Optimization Algorithm}
\label{s:algo}

The \name{} controller uses measurements from the radio endpoints to maximize signal strength at the receiver. In this section we will first describe the challenges in measuring changes in the channel, and why we rely solely on RSSI measurements~(\S\ref{s:algo:measurement}). We then describe how to compute the optimal solution given perfect information about the channel, and an easy way to get a 2-approximate solution to the optimal~(\S\ref{s:algo:perfect-info}). Finally we describe our algorithm, and prove that it finds a 2-approximate solution~(\S\ref{s:algo:rssi-algo}).

\subsection{Measuring the Channel}
\label{s:algo:measurement}
\paragraph{A direct, but naive, method: } When all the elements are turned off, the channel is $h_Z$, by definition. Ideally, to measure each $h_i$, we could turn just the $i^{th}$ element on, and measure the difference from $h_Z$. But this change is usually too feeble to measure, because an element is just a small piece of metal lying somewhere in the environment. When we tried measuring this effect, we had to block the direct path between the radios with a sheet of metal, and keep the element in the closest indirect path. Even then, we had to average over hundreds of measurements. In general, this small change is well beyond our ability to measure (\S\ref{s:eval:micro:measure}).

Prior work~\cite{laia-nsdi} is able to measure the effect of individual elements because the direct path was blocked by a wall. Further, each of their elements physically traversed the wall with wires connecting antennas on either side of the wall. This allows each of their elements to have large effect on the channel. 

\paragraph{Boosting the signal.} Each $h_i$ may be small, but all the elements together can have a large effect. We could generate several random configurations of the surface by randomly choosing the state of each element. If we vary $N$ elements, the variance of measurements among these states will have an expected magnitude of $\sqrt{N}\sigma$ (due to the central limit theorem) where $\sigma$ is the variance of each $h_i$. This gives us a $O(\sqrt{N})$ boost in amplitude, which is an $O(N)$ boost in energy. 

\paragraph{Challenges in measuring phase.} A second skeleton algorithm would measure the channel for many random configurations of the surface, and solve the linear equations to obtain all the $h_i$. However this is also hard, since it needs to measure changes in the {\em phase} of the channel. First, many commodity devices do not report phase~\va{is this ok, since we don't demonstrate commodity devices?}. Second, linear regression computes unnecessary information, and is hence sample inefficient; we just want to know whether to turn an element on or off; we don't care what the exact phase and amplitude of $h_i$ is. Third, and most importantly, the change in the channel, even after this $O(\sqrt{N})$ boost, is still quite small~\va{Explain why the array will give a good and measurable difference though the boost is small?}. The carrier frequency offset (CFO) drifts fast enough that it overpowers the measurement unless we compute the difference in the channel immediately before and after the surface's state changes. For this, we need to know exactly when the surface changed its configuration. This means the clock of the receiver must be synchronized with that of the controller. In our micro-benchmarks, we synchronized the clocks to within $30 \mu$s by connecting a wire from the Arduino controller to the receiver. It is only with such synchronization that we get reasonable phase-measurements. While acceptable for doing research, this is impractical for deployment. 

\paragraph{Using RSSI.} One way to avoid CFO-drift is to rely purely on signal-strength measurements, ignoring phase information. This has the additional benefit of being deployable with commodity receivers that report the RSSI without phase.
RSSI is not always an absolute metric, and may vary due to automatic-gain control changes or temperature changes on the amplifier. Hence we always measure RSSI of a test state {\em relative} to a baseline state; e.g., the all-zeros state where all elements are turned off. We call this is the {\em RSSI-ratio}.

We develop a simple algorithm that uses RSSI measurements to find a 2-approximation to the optimal. That is, the $|h|$ for the solution it finds is at least $|h_{OPT}| / 2$. But first, let us see how the controller would work if it knew $h_E$ and all the $h_i$.

\subsection{Optimal and Approximate Solutions Given Perfect Information}
\label{s:algo:perfect-info}
Assume that equation~\ref{eqn:base-model} is accurate. Later we verify it experimentally~(\S\ref{s:eval:micro:linearity}). The controller needs to assign values 0 or 1 to each $b_i$ such that $|h| = |h_E + \sum_i b_ih_i|$ is maximized. Let $h_{OPT}$ be an optimal solution. In this solution, $b_i = 1$ if and only if $h_i \cdot h_{OPT}^* \ge 0$ ($x^*$ denotes complex conjugation). Otherwise, we could flip $b_i$ to get $|h_{OPT} - b_ih_i + (1 - b_i)h_i| = |h_{OPT} + (1 - 2b_i)h_i| \ge |h_{OPT}|$. 

\begin{figure}
\centering
\includegraphics[width=0.55\linewidth]{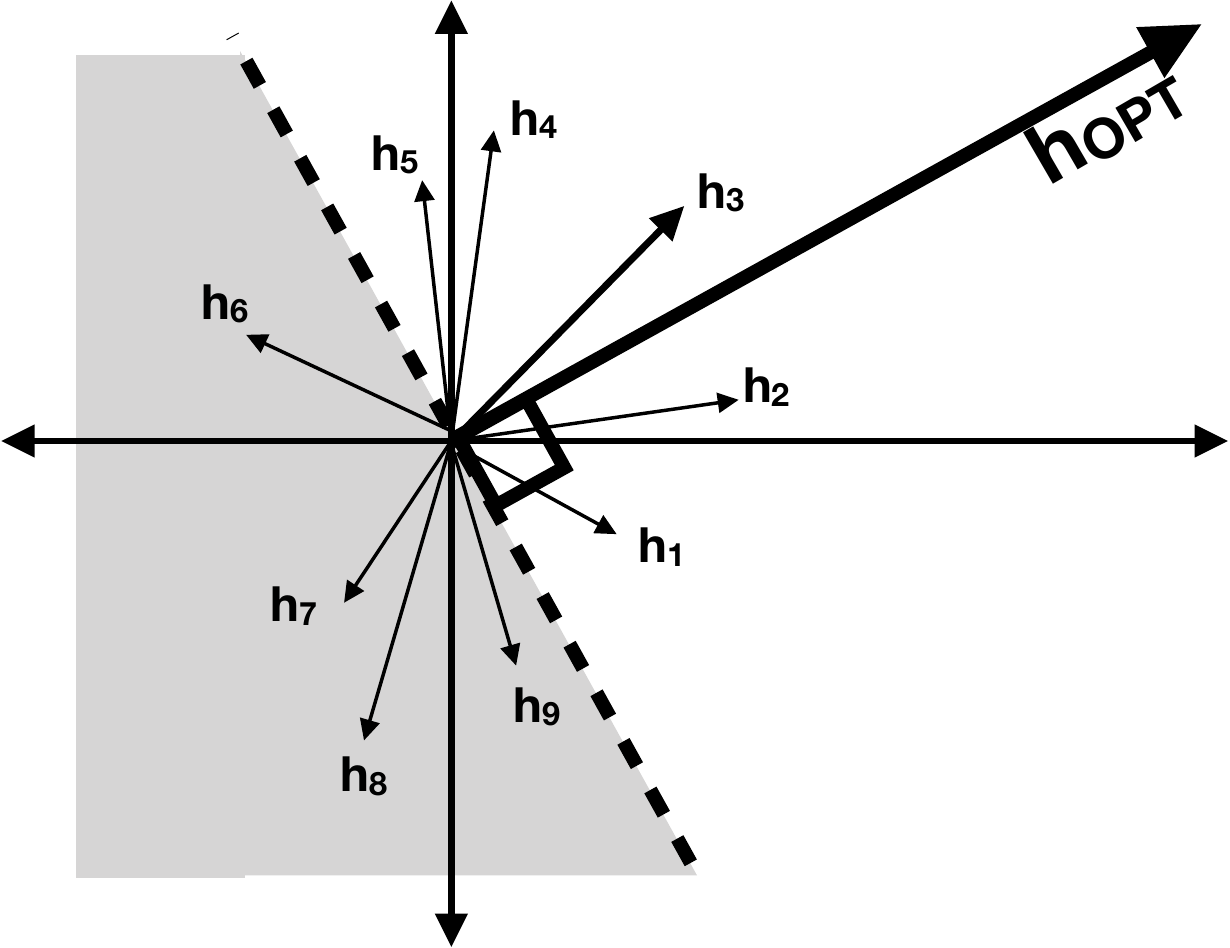}
\includegraphics[width=0.41\linewidth]{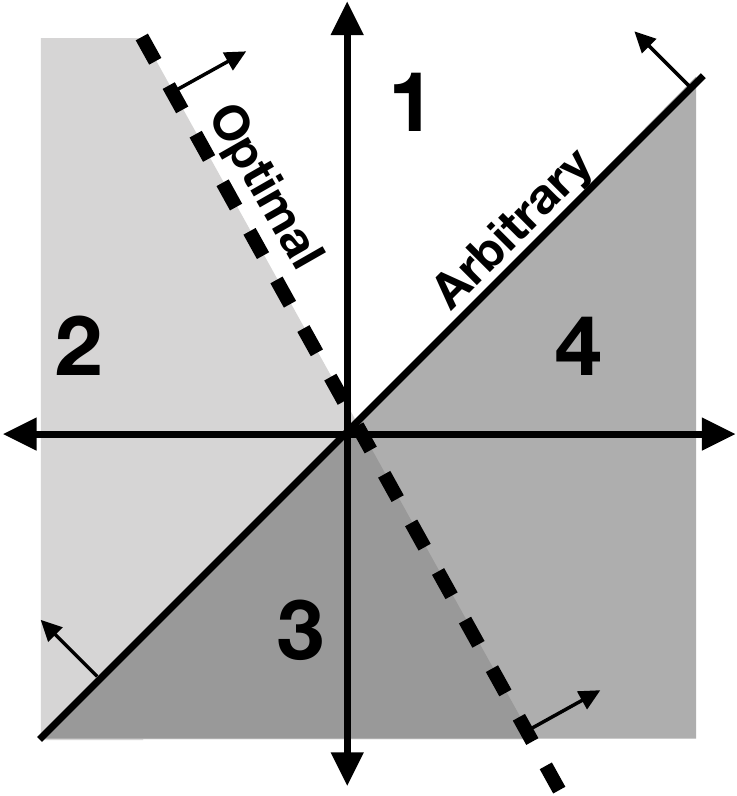}
\caption{The figures show the channel variables on a complex plane. On the left, the optimal solution consists of choosing all $h_i$ on on side of a halfplane (the non-greyed region), and setting the others to zero. This removes  destructive interference from the elements in the greyed-out region and improves signal strength. The precise choice of the halfplane is not very important, and an arbitrary choice (right) will give us a 2-approximation.}
\label{fig:halfplane}
\end{figure}

Therefore the optimal solution consists of the subset of $h_i$ that make an acute angle with $h_{OPT}$. Therefore all $h_i$ lie in one halfplane separated by a line passing through the origin as shown on the left in Figure~\ref{fig:halfplane}.
To find an optimal assignment, we need to guess the direction of $h_{OPT}$, and pick each $\alpha_i$ based on whether $h_i \cdot h_{OPT} \ge 0$. How do we guess the direction/phase of $h_{OPT}$? If we know $h_E$ and $h_i \, \forall i$, then this is easy. Simply iterate through all the directions\footnote{There are an infinite number of directions. But the resulting configuration changes only when the new line includes/excludes a new $h_i$. Hence we only need to consider $O(n)$ directions, and the entire algorithm is $O(n^2)$.} 

But we do not have all the $h_i$. The following lemma shows that the choice of the direction of $h_{OPT}$ is not very important. It will be useful when we develop an algorithm that works purely based on RSSI measurements.

\begin{lemma}
\label{lemma:simple-2-approx}
Consider an arbitrary line passing through the origin of the complex plane. It divides the elements in two sets, depending on which side their $h_i$ lies on. This gives two candidate solutions, where $b_i$ for elements in one set is set to 1 and the elements in the other is set to 0. Of these two, the solution that gives a higher value of $|h|$ is a 2-approximation. That is, $|h| \ge |h_{OPT}| / 2$. 
\end{lemma}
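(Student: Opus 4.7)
The plan is to leverage two ingredients: the halfplane characterization of $h_{OPT}$ just established (namely, $b_i^{OPT} = 1$ iff $h_i \cdot h_{OPT}^* \ge 0$), together with the observation that the sum $h_A + h_B = 2h_E + \sum_i h_i$ is invariant under the choice of the line $L$, since every element either contributes to $h_A$ (if in $S_+$) or to $h_B$ (if in $S_-$).

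First, I would fix $u = h_{OPT}/|h_{OPT}|$, so that the optimality characterization gives $|h_{OPT}| = \mathrm{Re}(h_E \bar u) + \sum_{i \in OPT} \mathrm{Re}(h_i \bar u)$ as a sum of non-negative terms. Next, lower-bound each candidate by its projection onto $u$ using $|z| \ge \mathrm{Re}(z\bar u)$: $|h_A| \ge \mathrm{Re}(h_E \bar u) + \sum_{i \in S_+} \mathrm{Re}(h_i \bar u)$ and similarly for $|h_B|$. Adding, and using the invariance identity, yields $|h_A| + |h_B| \ge 2\,\mathrm{Re}(h_E \bar u) + \sum_i \mathrm{Re}(h_i \bar u)$. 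The proof would then partition the elements into four cells according to the signs of $\mathrm{Re}(h_i \bar u)$ (whether $i \in OPT$) and $\mathrm{Re}(h_i \bar n)$ (whether $i \in S_+$), where $n$ is the normal to $L$; within each cell the contribution to $h_A$, $h_B$, and $h_{OPT}$ is of a fixed sign, so a careful algebraic accounting shows that at least one of $|h_A|, |h_B|$ must absorb at least half of the OPT-aligned contributions.

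The main obstacle is that projecting solely onto $u$ loses information in the regime where $L$ is nearly perpendicular to the optimal halfplane boundary; in that case $h_A - h_B = \sum_{i \in S_+} h_i - \sum_{i \in S_-} h_i$, rather than $h_A + h_B$, carries most of the relevant signal (both $\sum_{S_+} h_i$ and $-\sum_{S_-} h_i$ lie in the halfplane with normal $n$, so their magnitudes add constructively along $n$). I would expect to handle this by combining both triangle-inequality bounds through the parallelogram identity
\begin{equation*}
|h_A|^2 + |h_B|^2 \;=\; \tfrac{1}{2}|h_A + h_B|^2 + \tfrac{1}{2}|h_A - h_B|^2,
\end{equation*}
which guarantees $\max(|h_A|,|h_B|) \ge \tfrac{1}{2}\sqrt{|h_A+h_B|^2 + |h_A-h_B|^2}$. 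Bounding each of $|h_A + h_B|$ and $|h_A - h_B|$ below using the two halfplane structures, and comparing against $|h_{OPT}| \le |h_E| + \sum_i |h_i|$, should close out the $2$-approximation.
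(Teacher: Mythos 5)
You have correctly isolated the one dangerous regime---$L$ nearly perpendicular to the boundary of the optimal halfplane---but the parallelogram-identity patch cannot rescue the argument there, because the lemma is actually \emph{false} for a truly arbitrary line. Take $h_E=0$ and four unit elements $e^{j(\pi/2-\epsilon)}$, $e^{j(\pi/2+\epsilon)}$, $e^{-j(\pi/2-\epsilon)}$, $e^{-j(\pi/2+\epsilon)}$. The optimal subset is the two vectors in the upper halfplane, so $|h_{OPT}|=2\cos\epsilon$ and $u=e^{j\pi/2}$. If $L$ is the imaginary axis, the two candidates are $\{e^{j(\pi/2-\epsilon)},e^{-j(\pi/2-\epsilon)}\}$ and its mirror image, each summing to magnitude $2\sin\epsilon$; the approximation ratio is $\cot\epsilon$, unbounded as $\epsilon\to 0$. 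Your own estimates confirm this: here $|h_A+h_B|=|2h_E+\sum_i h_i|=0$ and $|h_A-h_B|=\sum_i|\mathrm{Re}(h_i\bar{n})|=4\sin\epsilon$, so the parallelogram bound gives $\max(|h_A|,|h_B|)\ge 2\sin\epsilon$, which is exactly the truth and far short of $|h_{OPT}|/2=\cos\epsilon$. The two quantities your projections control---the component of $h_A+h_B$ along $u$ and the component of $h_A-h_B$ along $n$---simply do not dominate $\sum_i\max(0,\mathrm{Re}(h_i\bar{u}))$ when the $h_i$ occur in nearly cancelling pairs aligned with $u$, and no accounting over your four sign cells can change that, since the target inequality itself fails.

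For what it is worth, the paper's own proof stumbles at exactly the same spot. It splits the plane into four sectors with sums $H_1,\dots,H_4$, deduces $|H_1|+|H_4|\ge|h_{OPT}|$, and then asserts that adding $H_2$ to $H_1$ ``cannot decrease their magnitude, since the parts subtend an acute angle with each other.'' Two adjacent sectors that together tile a halfplane contain vectors subtending angles up to $\pi$; in the example above $H_1$ and $H_2$ are at angle $\pi-2\epsilon$ and nearly cancel. So your instinct that this regime is the crux was right---it is not a technical obstacle but a genuine counterexample, and the honest conclusion is that the statement needs an extra hypothesis rather than a cleverer proof. The line achieving $\max_\theta\sum_i|\mathrm{Re}(h_ie^{-j\theta})|$ does give a constant-factor ($\pi$-) approximation by the averaging argument the paper uses later in \S\ref{s:phys:num-states}, and the optimal line trivially gives the optimum, but a worst-case line admits no constant-factor guarantee. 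To support the Theorem in \S\ref{s:algo:rssi-algo} one must additionally assume that the algorithm's separating direction $h_Z+\frac{1}{2}\sum_i h_i$ is not nearly orthogonal to $h_{OPT}$, or impose a distributional assumption on the $h_i$ that rules out the adversarial pairing above.
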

\begin{proof}
Consider the right side of Figure~\ref{fig:halfplane}. It shows the halfplane corresponding to the optimal solution, and a different, arbitrary, division of the halfplane. These divide the complex into 4 parts. Let the sum of $h_i$ of elements in these parts be $H_1, H_2, H_3$ and $H_4$. Then $h_{OPT} = H_1 + H_4$. The triangle inequality implies that $|H_1| + |H_4| \ge |H_1 + H_4| = |h_{OPT}|$. Hence either $|H_1| \ge |h_{OPT}| / 2$ or $|H_4| \ge |h_{OPT}|/2$. The sum of all the $h_i$ in the two sides of our arbitrary line is $H_1 + H_2$ and $H_3 + H_4$. Adding $H_2$ and $H_3$ to $H_1$ and $H_4$ respectively cannot decrease their magnitude, since the parts subtend an acute angle with each other. Hence at least one of the two sides of the arbitrary line has a sum with magnitude $\ge |h_{OPT}| / 2$.
\end{proof}

\subsection{The RSSI-Based Optimization Algorithm}
\label{s:algo:rssi-algo}


\if 0
\begin{algorithm}
\caption{A skeleton algorithm that gives a 2-approximation as $N \rightarrow \infty$ (see theorem~\ref{thm:2-approx}). \Call{Measure}{$(A, B, fixed, N)$} measures the RSSI-ratio for $N$ random states. In the random state, the bits in set $A$ are fixed to the values given in $fixed$, while the rest are picked randomly each time. $zeros$ is the all-zeros bit assignment. $K_1$ is a parameter of our system. We set $K = 128$.}
\begin{algorithmic}[1]
\label{alg:2-approx}
\State $A \gets$ Random subset of half of the elements
\State $init\_meas \gets $ \Call{Measure}{$(A, \, zeros, \, K_1)$}
\State $init \gets$ State in $init\_meas$ with the highest RSSI-ratio
\State $meas \gets$ \Call{Measure}{$(B, \, init\_best, \, N)$}
\end{algorithmic}
\end{algorithm}
\fi


\newcommand{\RSSI}{\mathrm{{\tt RSSI}}}
\newcommand{\States}{\mathrm{{\tt States}}}
\newcommand{\VoteOn}{\mathrm{\tt VoteOn}}
\newcommand{\VoteOff}{\mathrm{\tt VoteOff}}
\newcommand{\Opt}{\mathrm{\tt Opt}}

\begin{algorithm}[H]
\caption{\name{}'s majority voting algorithm}
\label{alg:majority-vote}
\begin{algorithmic}
\Procedure{MajorityVote}{$\States, \ \  \RSSI$}
\State // $\RSSI[j]$ gives the measured RSSI-ratio for $\States[j]$
\State $\Opt \gets$ blank-list \Comment{The final optimized state}
\State $a \gets$ \Call{Average}{$\RSSI$}

\For {$i := 0$ {\bf to} $\mathrm{\tt NumElements}$}
\State $\VoteOn \gets 0, \ \  \VoteOff \gets 0$

\For {$n := 0$ {\bf to} $\States\mathrm{{\tt.len}}$}
\If{
\State ($\States[n][i] == 1$ {\bf and} $\RSSI[n] > a$) {\bf or} 
\State ($\States[n][i] == 0$ {\bf and} $\RSSI[n] < a$)}
\State $\VoteOn \gets \VoteOn + 1$
\Else
\State $\VoteOff \gets \VoteOff + 1$
\EndIf
\EndFor
\State $\Opt[i] \gets (\VoteOn > \VoteOff)$

\EndFor

\State // One of $\Opt$ and $\mathbf{not}\:\:\:\Opt$ is a 2-approximation
\State \Return ($\Opt,\:\: \mathbf{not}\:\:\:\Opt)$

\EndProcedure
\end{algorithmic}
\end{algorithm}

Given the measured RSSI-ratio for a set of $K$ random configurations, our majority-voting algorithm (see Algorithm~\ref{alg:majority-vote}) finds the state for each bit $i$. It compares the RSSI-ratio of each measurement to the average value: if it is higher (or lower) than the average RSSI-ratio when the $i^{th}$ element is on (or off), then it votes for the element to be on. Else it votes for the element to be off. The $i^{th}$ element's optimized state is determined by which value received more votes. As the following theorem shows, this algorithm effectively finds the 2-approximate solution discussed above.


\begin{theorem}

Assume the model given in equation~\ref{eqn:base-model} is correct, $N \rightarrow \infty, K \rightarrow \infty$ and $|h_i| \ll \left|h_Z + \frac{1}{2}\sum_{j=1}^Nh_j\right| \,\, \forall i$. $N$ is the number of elements and $K$ is the number of measurements. At-least one of the solutions returned by Algorithm~\ref{alg:majority-vote} produces a channel magnitude $|h|$ that is at least $\sfrac{1}{2}$ that of the optimal solution.
\end{theorem}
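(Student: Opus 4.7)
The plan is to reduce Algorithm~\ref{alg:majority-vote} to the halfplane division of Lemma~\ref{lemma:simple-2-approx}. I will argue that in the stated limits the algorithm's output $\Opt$ turns on exactly those elements whose $h_i$ lies on one side of a specific line through the origin, while $\mathbf{not}\:\Opt$ realizes the other side. Lemma~\ref{lemma:simple-2-approx} then gives the $2$-approximation guarantee automatically.

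First, I will reparameterize each random configuration by $\delta_j := b_j - \tfrac12 \in \{-\tfrac12, +\tfrac12\}$, which are independent and symmetric. Setting $\mu := h_Z + \tfrac12\sum_j h_j$ and $W := \sum_j \delta_j h_j$ gives $h = \mu + W$ and
\[
|h|^2 = |\mu|^2 + 2\,\mathrm{Re}(\mu^* W) + |W|^2.
\]
As $K \to \infty$, the sample mean $a$ converges to $E[|h|^2] = |\mu|^2 + \tfrac14\sum_j|h_j|^2$, so
\[
|h|^2 - a = 2\,\mathrm{Re}(\mu^* W) + \bigl(|W|^2 - E[|W|^2]\bigr).
\]
Under $|h_i|\ll|\mu|$ and $N\to\infty$, the linear-in-$W$ term dominates the quadratic-in-$W$ term, so to leading order the vote for element $i$, which is the sign of $\delta_i(|h|^2 - a)$, is the sign of $2\delta_i\,\mathrm{Re}(\mu^* W)$.

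Splitting $W = \delta_i h_i + W^{(i)}$ with $W^{(i)}$ independent of $\delta_i$,
\[
\delta_i \cdot 2\,\mathrm{Re}(\mu^* W) = \tfrac12\mathrm{Re}(\mu^* h_i) + 2\delta_i\,\mathrm{Re}(\mu^* W^{(i)}).
\]
The first summand is a deterministic signal whose sign equals $\mathrm{sign}(\mathrm{Re}(\mu^* h_i))$. The second summand is exactly symmetric about zero (flipping $\delta_i \to -\delta_i$ leaves its distribution invariant while negating its value), so shifting it by any positive constant places strictly more than half the mass on the positive side. Hence $P[\delta_i(|h|^2-a)>0] > \tfrac12$ iff $\mathrm{Re}(\mu^* h_i) > 0$, and as $K\to\infty$ the law of large numbers drives the majority vote almost surely to this sign. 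Therefore $\Opt[i] = 1$ iff $h_i$ lies in the open halfplane $\{z : \mathrm{Re}(\mu^* z) > 0\}$, so $\Opt$ and $\mathbf{not}\:\Opt$ are precisely the two candidates of Lemma~\ref{lemma:simple-2-approx} for the line through the origin perpendicular to $\mu$; the lemma closes the argument.

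The hard part will be justifying that the quadratic correction $|W|^2 - E[|W|^2]$, together with its coupling to $\delta_i$ (which is not perfectly symmetric under $\delta_i \to -\delta_i$), cannot overturn the sign of $P - \tfrac12$ induced by the symmetric leading term. I plan to address this via a Berry-Esseen-style CLT estimate, showing that the asymmetric contribution is of smaller order than the fluctuation scale of the symmetric term thanks to the gap forced by $|h_i|\ll|\mu|$ and $N\to\infty$. A secondary bookkeeping step, handled by Hoeffding plus a union bound over the $N$ bits, quantifies how fast $K$ must grow with $N$ for the sample majority to match the population majority simultaneously for every element.
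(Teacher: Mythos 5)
Your proposal is correct and is essentially the paper's own argument, restated algebraically: your decomposition $|h|^2 - a = 2\,\mathrm{Re}(\mu^* W) + (|W|^2 - E[|W|^2])$ with the linear term dominating under $|h_i|\ll|\mu|$ is exactly the paper's geometric claim that the conditional mean moves outside the circle of radius $|\mu|$ iff it crosses the tangent line perpendicular to $\mu$, and both proofs then hand the resulting halfplane split to Lemma~\ref{lemma:simple-2-approx}. The ``hard part'' you flag (controlling the asymmetric quadratic correction) is simply asserted away in the paper via the CLT and the $|h_i|\ll|\mu|$ approximation, so your treatment is, if anything, more explicit about where the approximation enters.
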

\begin{proof}
\begin{figure}
\includegraphics[width=\linewidth]{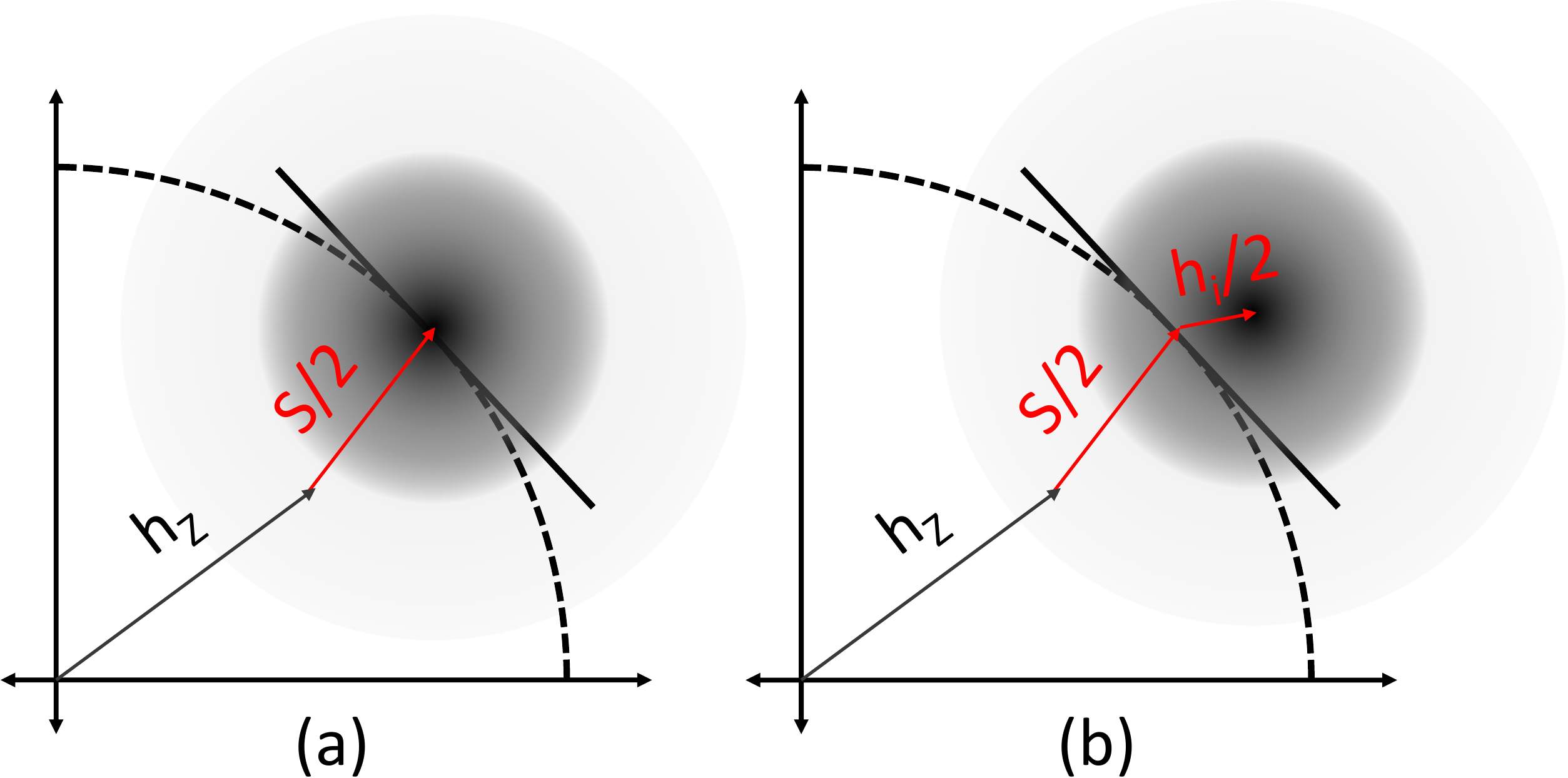}
\caption{(a) shows the probability density function of the channel when all bits are chosen uniformly at random. (b) shows the PDF conditioned on the $i^{th}$ bit being `on'. The dashed circle is centered at the origin with radius $|h_Z +S/2|$, where $S = \sum_{i=1}^N$ . Depending on which side of the circle $h_i$ takes the mean, the mean magnitude will be greater or lesser than that of the unconditional PDF.}
\label{fig:algo-mag}
\end{figure}

According to equation~\ref{eqn:base-model}, when we randomly vary $b_i$, $h$ becomes a random variable, $H$, with mean $h_Z + S/2$, where $S = \sum_{i=1}^Nh_i$ is the sum of the $h_i$ of the rest of the elements. The $S/2$ term appears because we include each element with probability half. Figure~\ref{fig:algo-mag}a) shows this probability distribution. Consider an element $i$ that hasn't yet been fixed. If we condition the probability distribution on the $i^{th}$ element being on, then the PDF shifts by $h_i/2$ as shown in Figure~\ref{fig:algo-mag}b), because the $i^{th}$ element's value is fixed in these samples (it shifts by $h_i/2$ and not $h_i$ since we have already included the other $h_i/2$ in $S/2$). If we condition on the element being off, then the mean shifts by $-h_i/2$.

The central limit theorem implies that $H / \sqrt{N}$ is Gaussian as $N \rightarrow \infty$. Hence, if $h_i/2$ (or $-h_i/2$) shifts the mean to be within the circle, then when the element is on (or off) the RSSI-ratio is more likely to be less than the unconditional mean. If it $h_i/2$ shifts the mean outside the circle, then the opposite holds. Hence, as $K \rightarrow \infty$, we can determine with confidence tending to 100\%, whether the conditional mean is inside or outside the circle, by looking at RSSI-measurements alone.

We assumed that $|h_i| \ll \left|h_Z + \frac{1}{2}\sum_{j=1}^Nh_j\right| \,\, \forall i$. This is reasonable since $h_i$ is just the effect of just one element, which is $O(\sqrt{N})$ times smaller compared to $S / 2$ as $N\rightarrow\infty$, if $h_i$ are i.i.d random variables. Given this assumption, to a good approximation, the mean shifts outside the circle if and only if it shifts to the outer side of the tangent line shown in figure~\ref{fig:algo-mag}. 

Thus, we can determine on which side of the tangent line an element lies, purely by looking  at the RSSI ratio. This allows us to get a 2-approximate solution as shown in Lemma~\ref{lemma:simple-2-approx}. 
\end{proof}

\paragraph{The Optimization Algorithm in Practice} In practice, we make a few changes to Algorithm~\ref{alg:majority-vote}: We cannot perform $K \rightarrow \infty$ measurements. So we stop varying an element as soon as its value is known with $> 95\%$ confidence, as determined by a two-sided Student's t-test. This enables us to get the benefits of an element, as soon as we are confident about its value. This gives us an organic method to prioritize higher-impact elements, since we will be confident of their value earlier. It also provides robustness against damaged/occluded elements, since they do not increase the time required to determine the other elements' values. We divide our evolution into batches of 2000 measurements. After each batch, we fix the values of elements we are sure about, and vary only the remaining ones\footnote{The unconditional mean changes, as we fix more and more elements. We account for this by comparing against the mean/median for RSSI-ratio of only that batch.}. Finally, we use median instead of mean, since it is more robust to one-shot measurement noise such as those due to packet loss and Automatic Gain Control (AGC) changes.

\section{Antenna Array Design}
In our design, we have made two key design choices. First, each reflector has only two states: one that reflects the signal, another that lets it through. Second, each element is half a wavelength tall and $1/10$ of a wavelength wide. In this section we explore the tradeoffs in these choices.

\subsection{How Big Should Each Element Be?}
\label{s:phys:pixel-size}

An array with many small antennas gives better control over the reflected signal, while one with fewer but larger antennas is cheaper and simpler. It is well understood that, in an array, the inter-antenna spacing should be smaller than half a wavelength, since otherwise there will be grating lobes, where the signal is sent in directions other than the desired direction. But is there a benefit to making the spacing even smaller? While our antennas are inexpensive enough that this may be cost effective, designing antennas that are much smaller than a wavelength~\footnote{Called ``electrically small antennas''} is challenging. Small antennas are either inefficient, absorbing only a small fraction of incident energy, or they are efficient only over a small bandwidth. Further, when placed close to each other, antennas interact strongly with each other in a way that is often hard to model.

Fortunately, a well understood result suggests that making antennas much smaller than half a wavelength will give only marginal benefits. Consider two infinite parallel planes a distance $d$ apart, separated by a homogeneous medium. Variations in electric/magnetic fields in one plane that are faster than once per wavelength, will have a negligible effect on the fields on the other plane (the effect they have decays exponentially with $d$). A formal statement and proof can be found in~\cite{angular-spectrum-representation}. Hence any fine-grained variations we introduce in the surface will be lost as soon as the signal propagates a few wavelengths in either direction. Hence we can design an array with antennas comparable to a wavelength, and still get most of the benefits.

\def \PixellationTheorem {
Consider the energy from a transmitter that is reflected onto the receiver from the surface. Assume both are far away from the sheet (the error falls exponentially with distance). We bound the ratio, $\epsilon$, between the energy reflected from an ideal surface (with infinitely many, infinitesimally small pixels) and that from a surface whose pixels' largest dimension is $a$:

\begin{equation}
\epsilon \ge \begin{cases}
	\left( \frac{1}{\sqrt{2}\pi}\frac{\sin{\pi \nu a}}{a\nu} \right) & a < \lambda = c / \nu \\
    0 & \mathrm{otherwise} \\
\end{cases}
\end{equation}
}

\subsection{How Many States Should Each Element Have?}
\label{s:phys:num-states}

In our design we chose elements that can be in only one of two states. But we could have chosen a design that offers greater control. Ideally we would be able to control the exact phase and amplitude with which each element reflects its energy. In terms of our model in equation~\ref{eqn:base-model}, we would have been able to set any $b_i \in \mathbb{C}, \, |b_i| \le 1$, instead of being restricted to $b_i \in \{0, 1\}$ ($|b_i| \le 1$ because \name{} doesn't emit any energy of its own). Let us denote the amount of energy that can be directed by the surface in the two cases as $h_{IDEAL}$ and $h_{REAL}$. In this section, we show that $|h_{REAL}| \ge |h_{IDEAL}| / \pi$. Hence, by having just two states, we can get $1/\pi^2$ of the signal strength we'll get with infinitely many states. Here we only consider the signal from the surface and not the direct path, $h_Z$. To maximize signal strength, we'd need to align the phases with $h_Z$ also, which is usually easy since the phases of $h_i$ are uniformly distributed, since the antennas are spaced closer than $\lambda/2$.

\begin{theorem}
$$|h_{REAL}| \ge \frac{|h_{IDEAL}|}{\pi}.$$
\end{theorem}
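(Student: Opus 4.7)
The plan is to first pin down what $|h_{IDEAL}|$ actually is, then exhibit a good binary assignment by an averaging argument over a one-parameter family of assignments.

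For the ideal case, since $|b_i|\le 1$ and we may choose its phase freely, $|h_{IDEAL}|=\big|\sum_i b_i h_i\big|$ is maximized by setting $b_i = e^{-i\arg(h_i)}$, so that all terms are positive reals. This gives the clean identity $|h_{IDEAL}| = \sum_i |h_i|$. That reduces the theorem to: there is a binary assignment $b\in\{0,1\}^N$ with $\big|\sum_i b_i h_i\big| \ge \frac{1}{\pi}\sum_i |h_i|$.

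Next, for any reference angle $\theta\in[0,2\pi)$, I would consider the ``halfplane'' assignment $b_i^{(\theta)} = 1$ iff $\cos(\arg h_i - \theta) \ge 0$ (this is exactly the family from Lemma~\ref{lemma:simple-2-approx}). Writing $h_i = |h_i| e^{i\phi_i}$, the projection of the resulting sum onto direction $e^{i\theta}$ is
\begin{equation*}
\operatorname{Re}\!\Bigl(e^{-i\theta}\sum_i b_i^{(\theta)} h_i\Bigr) \;=\; \sum_i |h_i|\,\max\bigl(\cos(\phi_i-\theta),\,0\bigr),
\end{equation*}
which is a lower bound on $|h_{REAL}^{(\theta)}|$. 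Now average over $\theta$ uniformly on $[0,2\pi)$. The key computation is
\begin{equation*}
\frac{1}{2\pi}\int_0^{2\pi}\max(\cos(\phi_i-\theta),\,0)\,d\theta \;=\; \frac{1}{2\pi}\int_{-\pi/2}^{\pi/2}\cos u\,du \;=\; \frac{1}{\pi},
\end{equation*}
which is independent of $\phi_i$. Linearity of expectation then gives
\begin{equation*}
\mathbb{E}_\theta\bigl[|h_{REAL}^{(\theta)}|\bigr] \;\ge\; \frac{1}{\pi}\sum_i |h_i| \;=\; \frac{|h_{IDEAL}|}{\pi},
\end{equation*}
so some particular $\theta^\star$ achieves at least the mean, giving the claimed bound.

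I do not expect a serious obstacle here: the only delicate point is making sure the averaging argument is over a family of \emph{feasible} binary assignments (which it is, since $b_i^{(\theta)}\in\{0,1\}$), and checking that the $\frac{1}{\pi}$ constant arises from integrating the positive part of a cosine rather than from a cruder bound. A minor stylistic decision is whether to present the result as a derandomization (there \emph{exists} $\theta^\star$) or simply observe that since the minimum over $\theta$ is at most the mean, brute-force search over the $O(N)$ combinatorially distinct halfplane cuts suffices — matching the algorithmic framing used earlier in Section~\ref{s:algo:perfect-info}.
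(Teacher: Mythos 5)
Your proof is correct and follows essentially the same route as the paper: both arguments average the projection of a halfplane-cut assignment over all cut directions $\theta$ and extract the $1/\pi$ constant from the integral of a cosine. The only difference is bookkeeping — you integrate $\max(\cos,0)$ directly to get $1/\pi$, whereas the paper integrates $|\cos|$ (giving $4$ over a full period) and compensates with a separate factor-of-two step comparing $\theta_0$ to its antipode; your version is slightly cleaner but not a different idea.
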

\begin{proof}

In the ideal case, we should align the phases of all $h_i$ to maximize $|h|$. Then, we get $|h_{IDEAL}| = \sum_{i=1}^N|h_i|$~(\S\ref{}).  Define $A = \int_{-\pi}^{\pi} \sum_{i=1}^N |h_i \cdot e^{j\theta}|\mathrm{d}\theta$. Intuitively, it computes the sum of components of $h_i$ along angle $\theta$ and integrates over all $\theta$. Each $\theta$ corresponds to a perpendicular to a halfplane, as discussed before in \S\ref{s:algo:perfect-info}. At least one of these, say $\theta_0$, corresponds to the optimal half-plane, wherein the optimal solution contains all the $h_i$ such that $h_i \cdot e^{j\theta_0} > 0$. These contribute $h_i \cdot e^{j\theta}$ toward $h_{REAL}$. Thus, $|h_{REAL}| = \sum_{i=1}^N \mathrm{Max}(0, h_i \cdot e^{j\theta_0})$, hence $|h_{REAL}| \ge \frac{1}{2} \sum_{i=1}^N|h_i \cdot e^{j\theta_0}|$, because otherwise we could have chosen $-\theta_0$ and obtained a better $|h_{REAL}|$. Hence $|h_{REAL}| \ge \frac{1}{2} \mathrm{max}_{\theta \in [\pi, \pi)} \sum_{i=1}^N |h_i \cdot e^{j\theta}| \ge \frac{1}{2} \frac{1}{2\pi} \int_{-\pi}^{\pi} \sum_{i=1}^N |h_i \cdot e^{j\theta_0}|~\mathrm{d}\theta = \frac{A}{4\pi}$.

We can rearrange the sum as $A = \sum_{i=1}^N \int_{-\pi}^\pi |h_i e^{j\theta}|~\mathrm{d}\theta = \sum_{i=1}^N |h_i| \left(\int_{-\pi}^\pi |\cos{\theta}|~ \mathrm{d}\theta\right)$. The second step is possible, because $\cos{\theta}$ is taking a dot-product of $h_i$ over a unit complex number with phase $\theta$. Since we are integrating over all angles, it doesn't matter which angle we start from. Now we can evaluate the integral to get $A = 4\sum_{i=1}^N|h_i|= 4|h_{IDEAL}|$. Since $|h_{REAL}| \ge \frac{A}{4\pi}$, $|h_{REAL}| \ge \frac{|h_{IDEAL}|}{\pi}$. 
\end{proof}


\subsection{Our Design}
\label{s:phys:design}

\begin{figure}
\includegraphics[width=\linewidth]{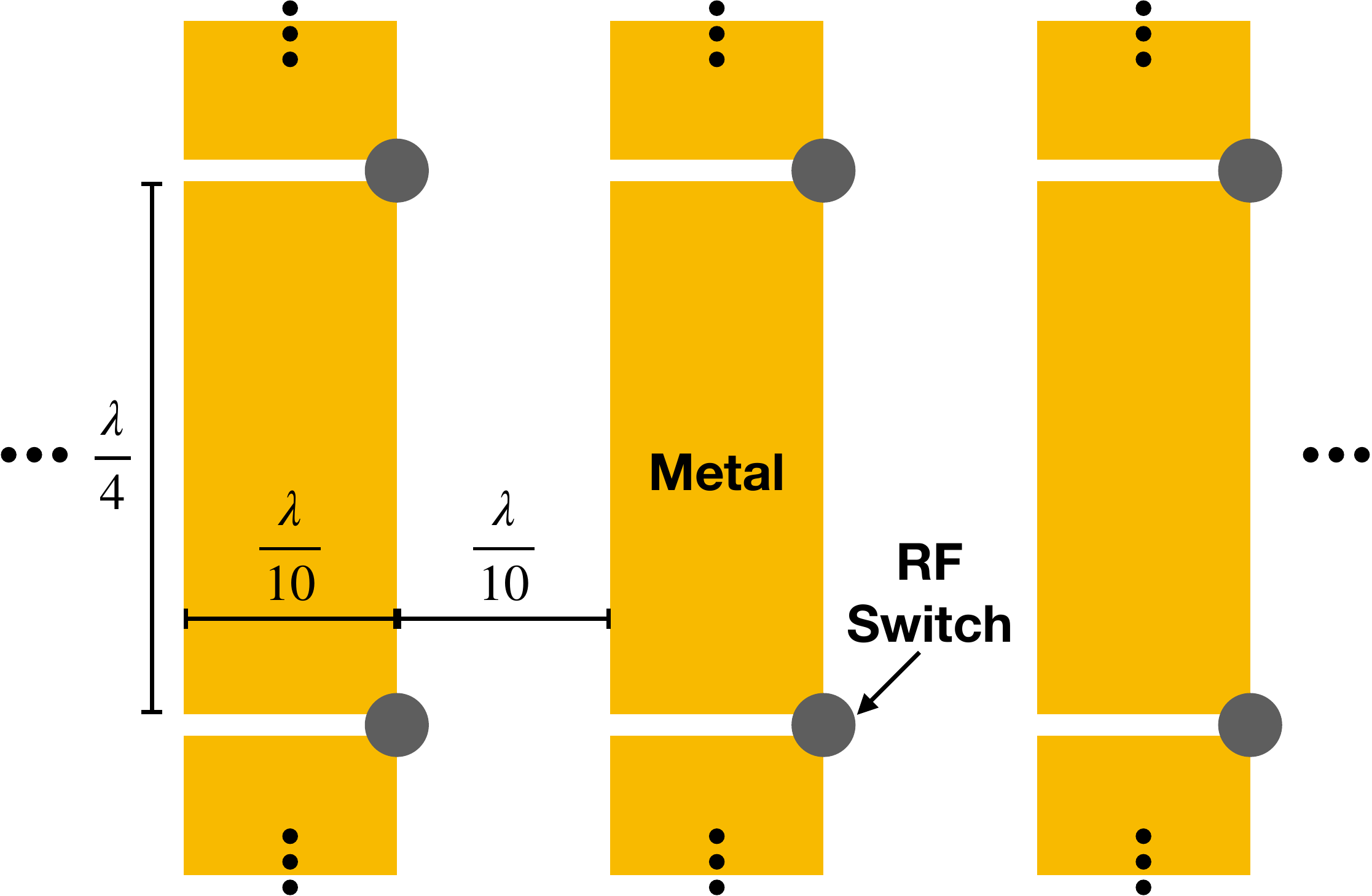}
\caption{Schematic of the design of our antenna array. This array of rectangles continues in both directions.}
\label{fig:phys-design}
\end{figure}
Our antenna array design consists of a panel of $\lambda/4 \times \lambda/10$ sized metal rectangles as shown in Figure~\ref{fig:phys-design}, where $\lambda$ denotes the wavelength. They are connected by RF-switches which determine whether or not the rectangles are connected (the switches are placed off-center for practical PCB-design reasons). This design works only for vertically polarized radiation. It can be generalized to all polarizations by having an identical copy perpendicular to this one.

There are two principles of operation. First, the rectangles by themselves are too small to strongly interact with radiation. However, when an RF switch is turned on, it joins two rectangles to form a $\lambda/2 \times \lambda/10$ rectangle. This forms a half-dipole antenna and interacts strongly with incident radiation. We made the strips wide to support a wider bandwidth of operation. The second principle is that, if a plate of metal has small holes in it, then radio behaves as if the holes weren't there. This is the same reason why microwave oven doors have small holes in them, and why airport radars can use a metal cage, and not a solid sheet, to form their rotating dish antenna. A commonly used rule-of-thumb states that the holes need to be smaller than $\lambda/10$. This motivates our choice of gaps between rectangles. When switches in adjacent columns are turned on, their rectangles should behave as a continuous sheet of metal, rather than individual columns, allowing us a large control over incident energy. Because  neighbors act in a simple way, we expect that the neighbors' state wouldn't change the phase of the currents induced in the rectangle, only the magnitude. Hence the linear model in equation~\ref{eqn:base-model} should be approximately correct.

The above reasoning is merely the conjecture that motivated our design. Simulating such a large array is very computationally intensive, therefore experiment is the most tractable option. We conduct two experiments to partially validate this theory.  \S\ref{s:eval:micro:linearity} demonstrates that the linear model is approximately correct, and~\S\ref{s:eval:micro:opacity} shows that the surface has a large control over the incident energy. Validating this design in an anechoic chamber would offer more insights, but we leave that for future work.

\va{Talk about doorknob example?}

\section{Evaluation}
\subsection{Experiment Setup}

Our antennas are fabricated on custom printed circuit boards, with 40 antennas per board. We mount 94 of these boards on metal frames and place it next to a wall in our lab. The boards are connected in series with a single SPI bus composed of shift registers, allowing our Arduino controller to set the state of each individual element. Setting one state takes $6\,ms$; generating 3760 pseudorandom bits and pushing them at 20Mbit/s though the serial bus consume most of the time. This is the primary bottleneck of our system. However this is a limitation of our implementation. One could imagine other architectures which are faster. For instance, rather than have one long serial bus, we could have multiple buses, each controlling a subset of the bits. Alternatively, each element could be similar to an RFID tag that sets its state to a pseudo-random number in response to a clock signal from the RFID-reader-like controller. If each element switched once every $10\,\mu s$ (the RF switch takes $< 1\,\mu s$), then we could do 10000 measurements every 100 ms, which is enough for typical indoor movements, since optimizing the surface requires $O(N)$ measurements ($N$ is the number of elements). Nevertheless, supporting motion is challenging with our architecture. For objects moving faster than walking speed, the easiest option would be to beamform to a large area, and give up on focusing to a small spot.

\paragraph{Measuring Phase Change} Some of our microbenchmarks measure the change in {\em phase} of the channel. To prevent this already weak signal from being further corrupted by drift in carrier frequency offset (CFO drift), it is important that the Arduino controller's time is synchronized to the radio's~\S\ref{s:algo:measurement}. To do so, we connect a wire from the Arduino to an RF-switch connected to the transmitter; The Arduino periodically switches off transmission, which the receiver detects to synchronize its clock to the Arduino's to within 30 $\mu$s. To give the phase-change of a state an absolute value, we always measure it relative to the channel where all elements are off, $h_Z$; If the phase of an assignment was $h$, we measure $h / h_Z$. Note, the synchronization is only for microbenchmarks. The main optimization algorithm relies purely on RSSI measurements, and does not require such fine synchronization.


\subsection{Microbenchmarks}
\label{s:eval:micro}

\subsubsection{Linearity}
\label{s:eval:micro:linearity}
Our optimization algorithm (\S\ref{s:algo:rssi-algo}) assumes that the linear model in Equation (\ref{eqn:base-model}) is correct, which states that the elements do not interact with each other. That is, the $h_i$ for one element does not depend on $b_i$ for any other element. Since the elements are flat and only radiate perpendicular to the surface, we would expect this to be true for elements that are separated by more than a few wavelengths. However nearby elements may interact, especially since we placed them close together in order 	to be able to control a large fraction of incident energy. We had an intuition that the non-linearity due to this interaction should be small (\S\ref{s:phys:design}). Now, we experimentally test this intuition.

\begin{figure}
\centering
\begin{tabular}{|c|c|}
\hline
{\bf Total prediction error} & {\bf Error due to  noise} \\
\hline
$5.4\%$ & $2.0\%$ \\
\hline
\end{tabular}
\caption{A linear model predicts the channel due to a state with $5.4\%$ accuracy. If the surface were perfectly linear, the error would have been $2.0\%$ due to noise. Hence the \name{} is approximately, but not fully, linear.}
\label{fig:linearity}
\end{figure}

We prepare several random ``test-triples'' of states of the form $(S_A, S_B, S_{AB})$. $S_A$ and $S_B$ are mutually exclusive. That is, no two elements are `on' in both of them, and any element that is `on' in either $S_A$ or $S_B$ is also `on' in $S_{AB}$. Treating the states as bit-strings and $\&$ and $|$ as bitwise operators, $S_A \& S_B = 0$ and $S_{AB} = S_A | S_B$. Let $(h_A, h_B, h_{AB})$ be the ground-truth channels for the triple, if our linear model in equation~\ref{eqn:base-model} is correct, $h_A / h_Z + h_B / h_Z  - 1= h_{AB} / h_Z$. Our setup can measure $h_X$ for any state $X$, and we test how well we can predict $h_{AB} / h_Z$ given $h_A/h_Z$ and $h_B/h_Z$. To get reliable measurements of each of these ratios, we measure each ratio 100 times.

Non-linearity, if any, will arise when neighboring elements that weren't simultaneously `on' in $S_A$ or $S_B$, are both turned `on' in $S_{AB}$. To ensure there are many such cases, we randomly assign each element to $S_A$ or $S_B$. Then we randomly choose the value for elements in $S_A$ and $S_B$, and compute $S_{AB}$ using a bit-wise or. This way, many new neighbors will interact in $S_{AB}$. As shown in Figure~\ref{fig:linearity}, we can predict the value of $S_{AB}$ with 5.4\% error. Hypothetically, if the surface were perfectly linear, we would have $2.0\%$ error due to noise. Hence, though the \name{} isn't perfectly linear, the non-linearities are small.

\subsubsection{Controllability and Bandwidth}
\label{s:eval:micro:opacity}
\begin{figure}
\includegraphics[width=\linewidth]{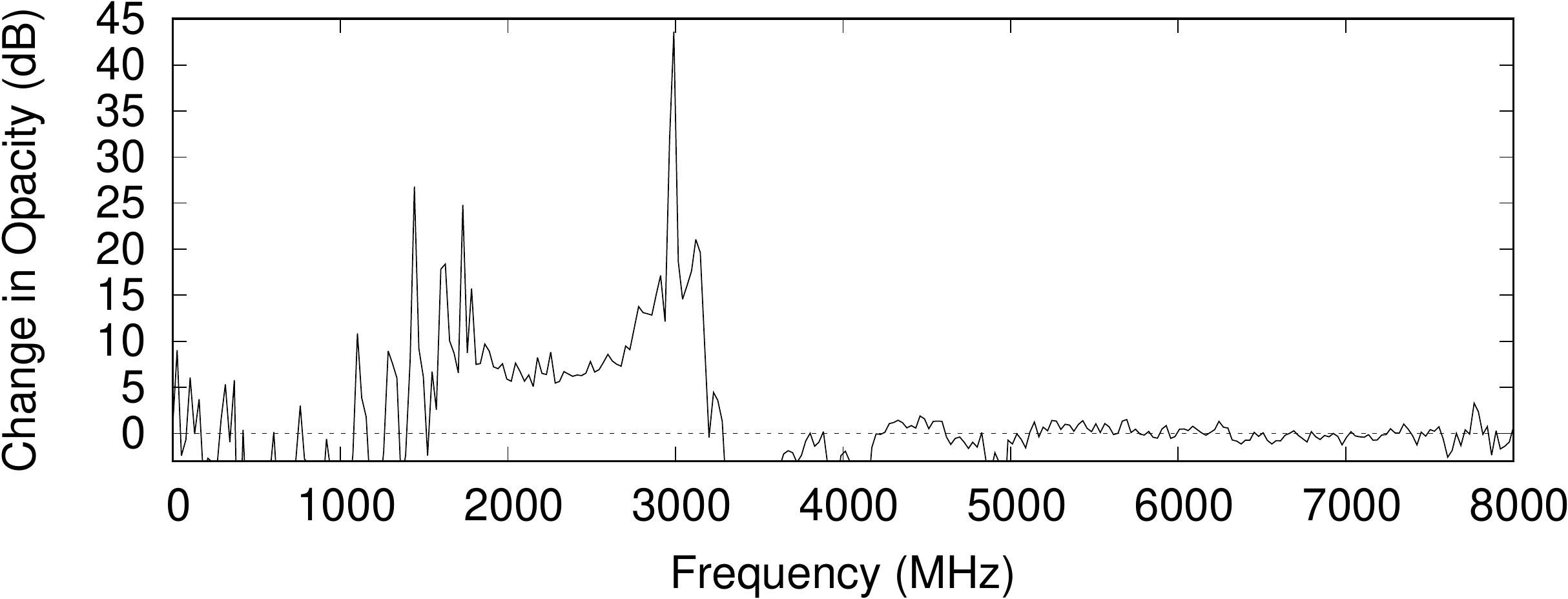}
\caption{The ability to control the surface's opacity to radio as a function of frequency.}
\label{fig:opacity}
\end{figure}

One design goal of our antenna array design was that it should be able to control a large fraction of incident energy. To test this, we kept the surface in between two wide-bandwidth directional (Vivaldi) antennas pointed at each other. Using a Vector Network Analyzer (VNA), we compare the signal strength between the antennas when all the elements are turned on and when they are all turned off. We expect that, when the elements are all on, the surface will be much more opaque to radiation, reflecting a large fraction of it. The ratio of signal strength in these configurations is shown in Figure~\ref{fig:opacity}. As shown, this ratio is consistently greater than 6 dB between 1600 and 3100 MHz. Hence, it can change its opacity by over 75\% over a large bandwidth. The peak is closer to 3000 MHz, where the change is well over 10 dB (90\% control). But all of our other results are in the 2450 MHz ISM band, in order to conform to FCC rules; we expect better results if we had operated closer to the peak. Frequency of operation can be tuned by scaling the sizes of the components. Since antenna design was not the focus of our current work, we leave this to future work.

The y-axis is cropped at $-3$ dB for clarity in showing our frequency range of interest. The change falls after 3000 MHz because our RF switch is only rated up to that level. At $< 1500$ MHz, the rectangles, even after joining, are too small to interact with radiation.

\subsubsection{Measurability}
\label{s:eval:micro:measure}
\begin{figure}
\includegraphics[width=\linewidth]{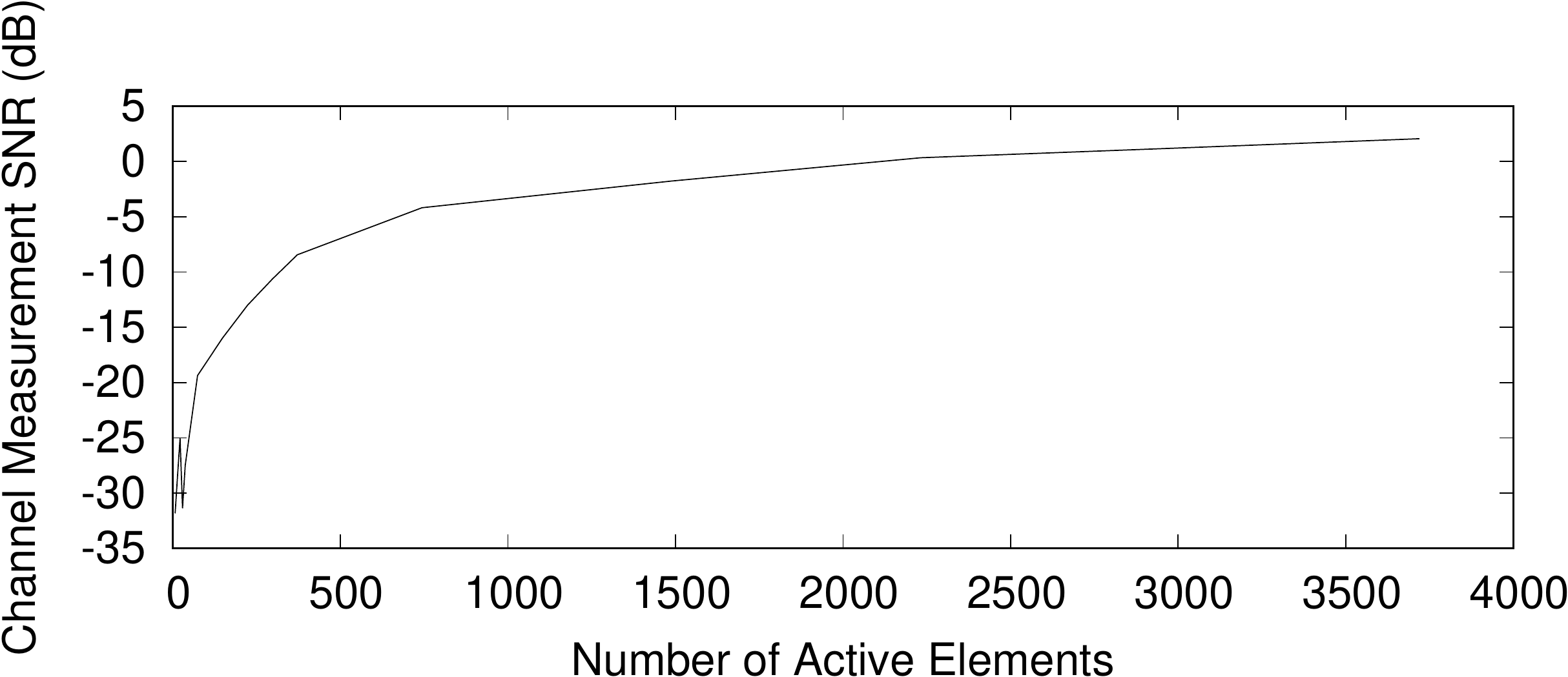}
\caption{Measurability of the effect random configurations of the \name{} surface on the channel, as a random subset of them are deactivated. It demonstrates why boosting~(\S\ref{s:algo:measurement}) is important for \name{} to function.}
\label{fig:measurability}
\end{figure}
To aid measurement of the effect of the \name{} surface on the channel, we vary all elements randomly and at once~(\S\ref{s:algo:measurement}). This gives us an $O(N)$ boost in our ability to measure the change in the channel. To experimentally study the effect of this boost, we compute the Signal to Noise Ratio (SNR) of the measurement as a function of the number of elements in the array. We artificially reduce the size of our array, by deactivating a random subset of it. 

We measure the RSSI-ratio for 100 different random configurations, repeating the measurement for each configuration 125 times. For each configuration, the RSSI-ratio is averaged over all measurements. The `Signal' in SNR is the variance in the (average) RSSI-ratio across all configurations, and the `Noise' is the average variance in the RSSI-ratio measurements within measurements of each individual configuration. We plot the SNR as a function of the number of active elements in Figure~\ref{fig:measurability}. In this experiment, the transmitter and receiver are on the same side of the \name{} surface, separated by about 1 meter; when the \name{} surface is optimized for this pair, it achieves a $12\times$ gain in RSSI.

We can see that SNR is much higher with a greater number of elements. Our estimate of SNR is only reliable above $-20\,dB$, hence the effect of varying just one element is well below our ability to detect. This is why boosting the signal by varying all elements at once, is critical.

Nevertheless, even at its highest point, measurability is still at $2\,dB$, which is why it is important that our algorithm be able to use noisy measurements. Note, this is the impact of random configurations on the channel. When optimized to eliminate destructive interference, the effect is amplified by $O(N)$, which is why the optimized state still produces significant gains in signal strength, though a random state doesn't have much impact.

\subsection{Signal Strength Optimization}
\label{s:eval:main}

\begin{figure}
\includegraphics[width=\linewidth]{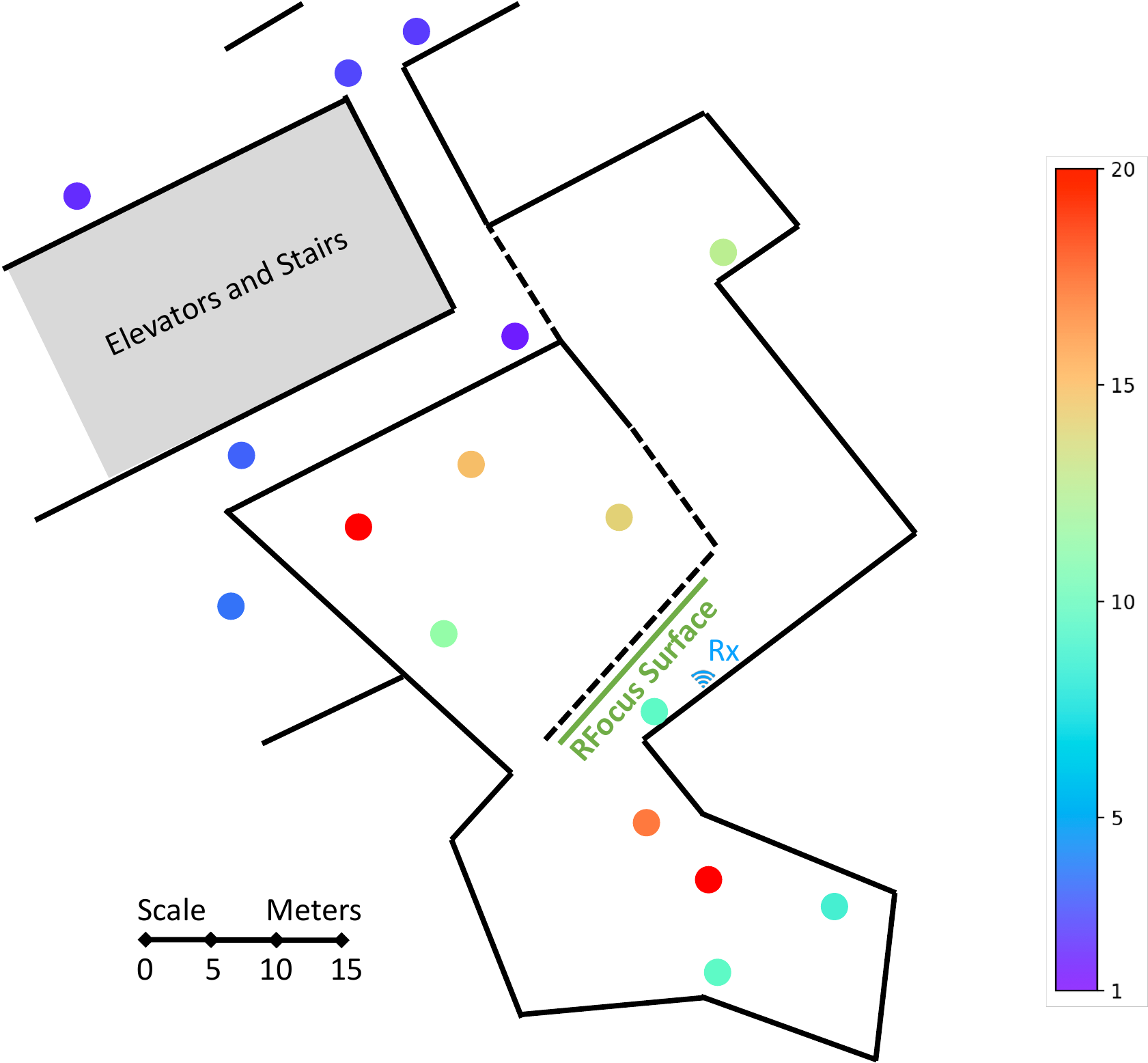}
\caption{Map of signal strength improvement (ratio). For the red point on the bottom left, the improvement is $64\times$, which is clipped on the map. The receiver and \name{} surface are in fixed positions, as shown. Signal strength improvement is plotted for various points on the map. The dotted lines indicate thinner, glass walls. A CDF is shown in~\ref{fig:main-eval:cdf}.}
\label{fig:main-eval:map}
\end{figure}

\begin{figure*}
\centering
{\subfigure[CDF of signal strength improvement. Median improvement is $10.5\times$]{
\includegraphics[width=.45\textwidth]{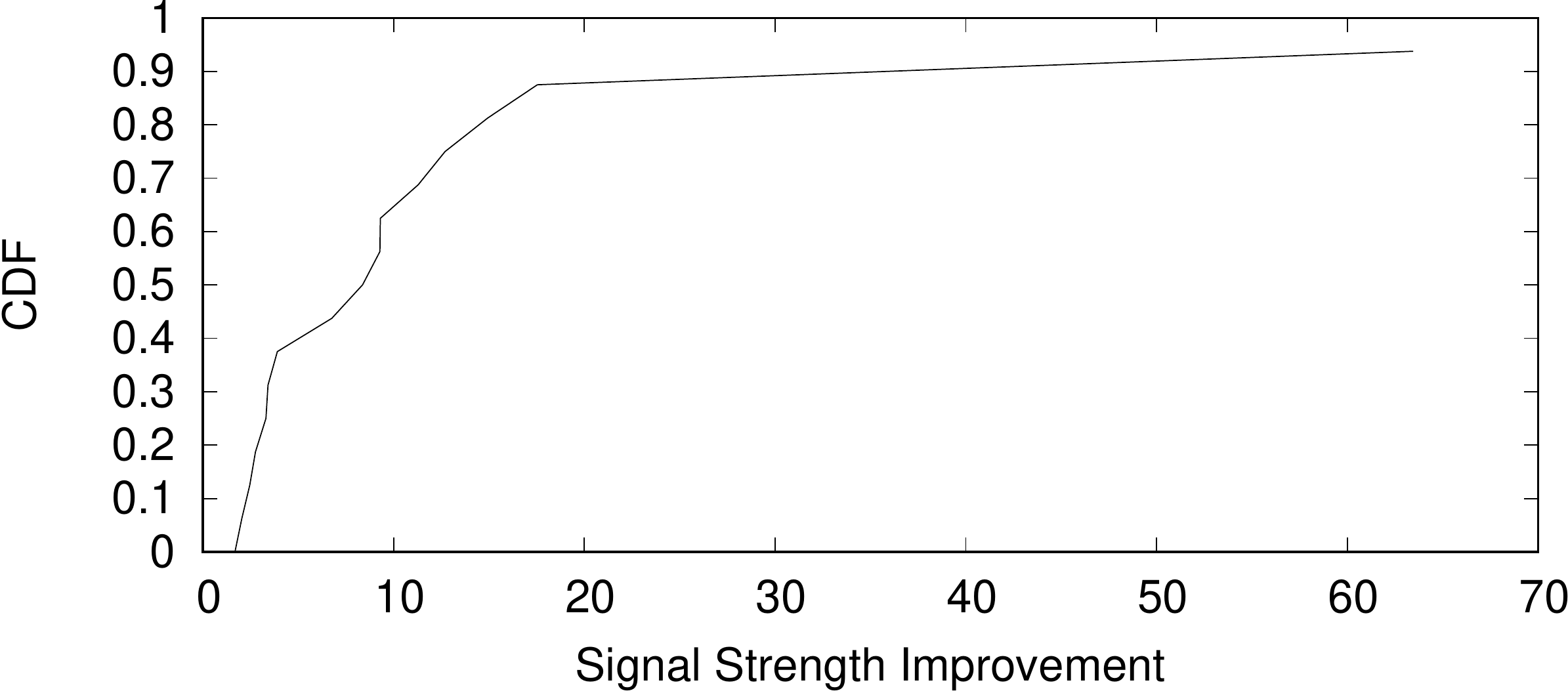}
\label{fig:main-eval:cdf}
}}
{\subfigure[CDF of channel capacity improvement. Median improvement is $2.1\times$]{
\includegraphics[width=.45\textwidth]{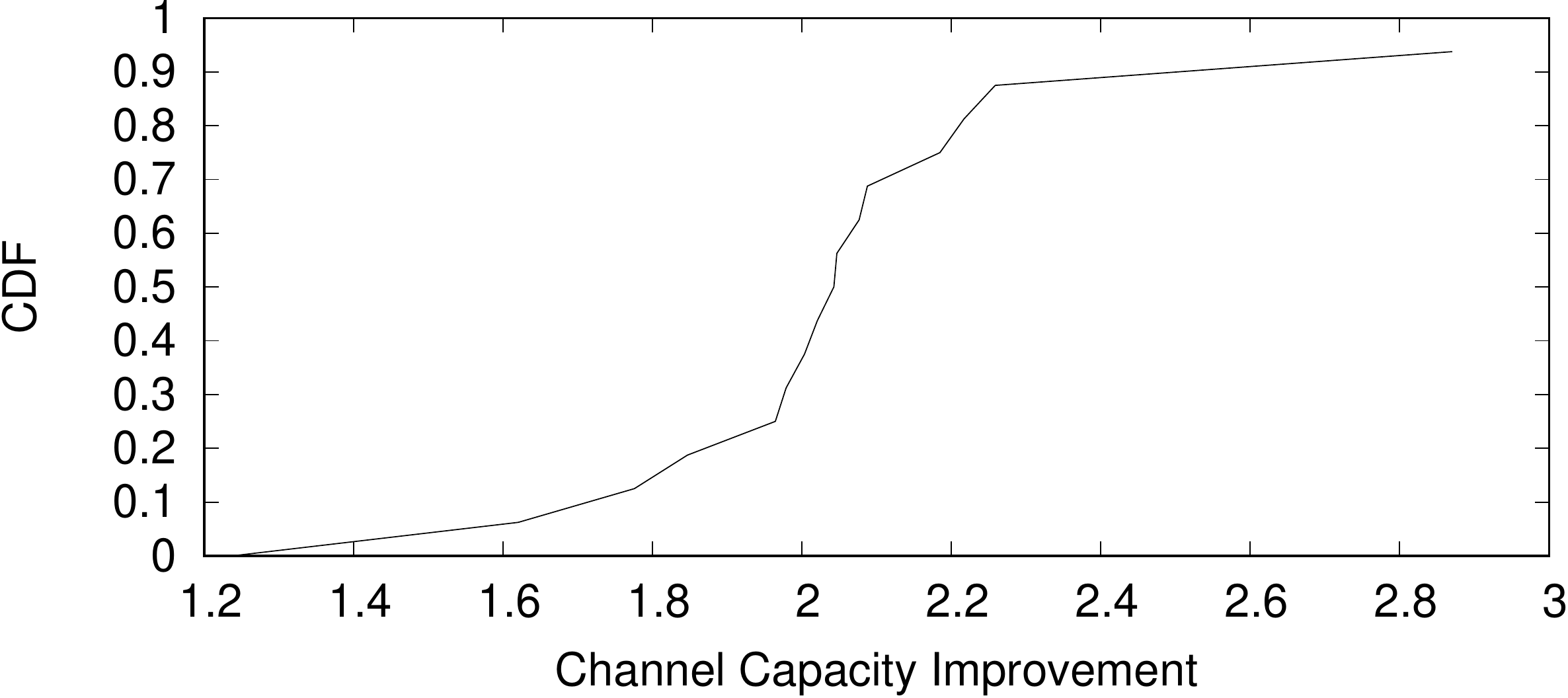}}
\label{fig:main-eval:cdf-tpt}}
\caption{Improvement in the signal strength and channel capacity. Notice that though the locations in the top left on the map~(Figure~\ref{fig:main-eval:map}) do not achieve much signal strength improvement, their baseline SNR is also low, which leads to a $\approx 2\times$ channel capacity improvement.}
\label{fig:main-eval}
\end{figure*}

For our main evaluation, we placed the receiver and the \name{} surface at a constant location as marked in Figure~\ref{fig:main-eval:map}. Then we placed the transmitter various positions in an indoor environment (our lab), and ran the optimization algorithm to maximize signal strength at the receiver. We measure the ratio of the improved signal strength to the signal strength when the all elements of the \name{} are `off'. We plot these in a map in Figure~\ref{fig:main-eval:map}. The corresponding CDFs are shown in Figure~\ref{fig:main-eval}. \name{} consistently achieves a $\approx 10\times$ improvement in signal strength for all points not occluded by a major wall/elevators, as marked by solid lines. For occluded points, it achieves a $2-4\times$ improvement. In all cases, it achieves $\approx 2\times$ improvement in throughput.

\name{} is able to achieve these benefits because its large area allows it to precisely focus energy from the transmitter to the receiver. This is particularly helpful when the transmitters are power constrained, since even a `whisper' will be `heard' clearly at the receiver. Yet, interference is minimized since the transmit power is low. This could enable a new regime of low-power, high throughput IoT sensor devices. Whenever the receiver wants to probe data from a sensor, it can ask the controller to tailor the surface for that particular endpoint. This takes $\approx\,1\,ms$. Then it can initiate communication with the sensor, which can transmit at low power. Since sensors do not tend to move, the same trained configuration can be used for extended periods of time. Most of our experiments we conducted during regular work-hours, with the usual indoor motion. The pictured area has $\approx 15$ workspaces. Some of our trained configurations worked across multiple days, as long as the endpoints didn't move.

Note that \name{} functions as both a mirror, when the receiver on the same side of the surface as the transmitter, and as a lens, when it is not.

\subsubsection{Improvement Across Frequencies}
\begin{figure}
\includegraphics[width=\linewidth]{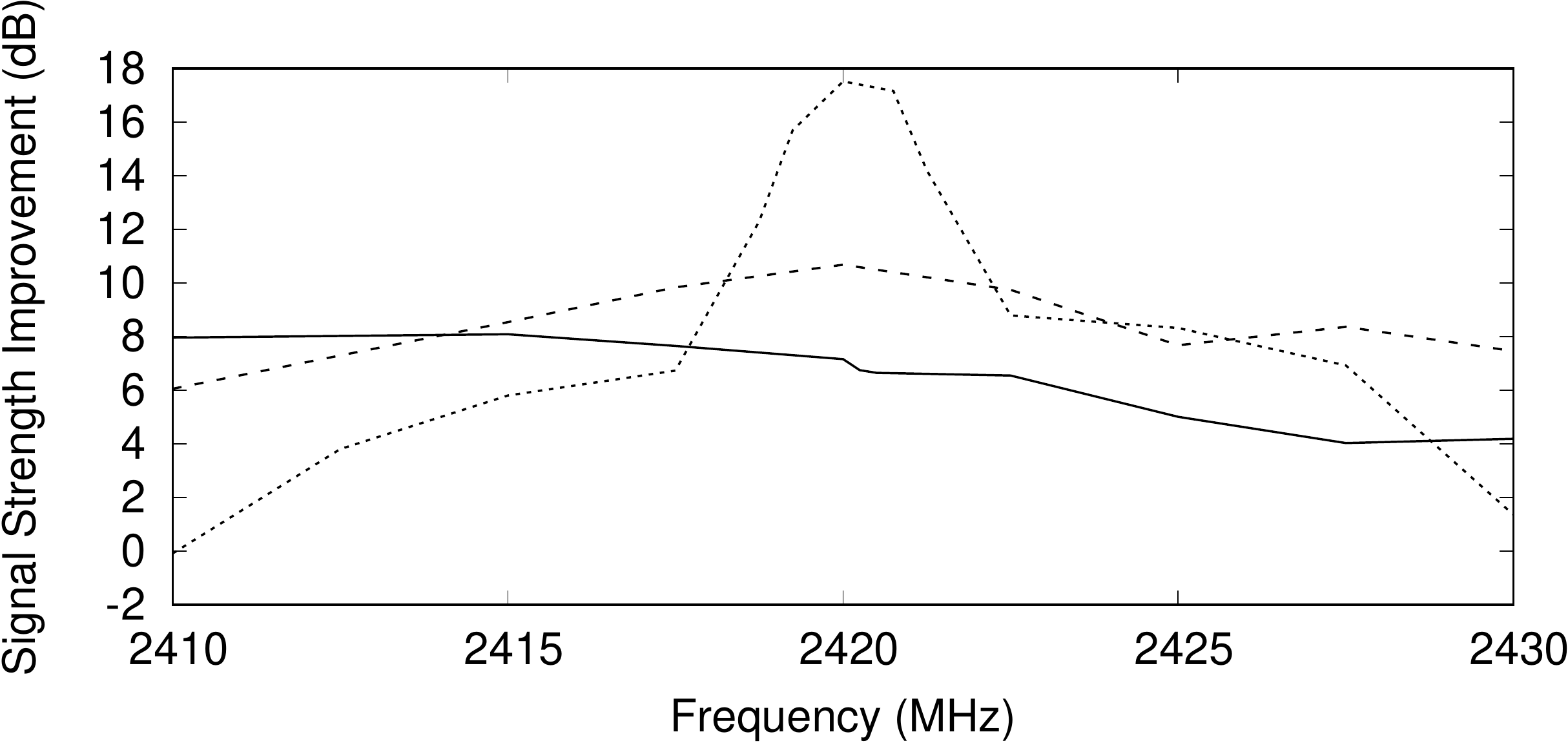}
\caption{Signal strength improvement as a function of frequency when RFocus was optimized for 2420 MHz. \name generalizes across nearby frequencies, even though the optimization algorithm only sought to optimize at one frequency.}
\end{figure}

Our optimization algorithm only seeks to improve the signal strength at a single frequency. Nevertheless, we find that it also provides improvement for nearby frequencies. We plot the improvement as a function of frequency for three different links, where the target optimization frequency was 2420 MHz, but we obtain benefits in a 20 MHz neighborhood.

This happens because \name{} benefits from spatial diversity. Since it has many paths between the same pair of points, each of them experiences fading separately, and it is unlikely that all of them fade simultaneously. The difference in wavelength between the two frequencies is only 1/120 of a wavelength. Hence, the optimal configuration for nearby wavelengths will be similar. Only when the spacing between a pair of elements is large enough for this difference to accumulate to $\approx \frac{1}{2}\lambda$, will the optimal configuration change appreciable. The natural frequency width of a \name{} optimized state is sufficient for WiFi channels upto 20MHz. We leave optimizing the \name{} for a wider range of frequencies for future work. 

\subsubsection{Quadratic Growth}
\begin{figure}
\includegraphics[width=\linewidth]{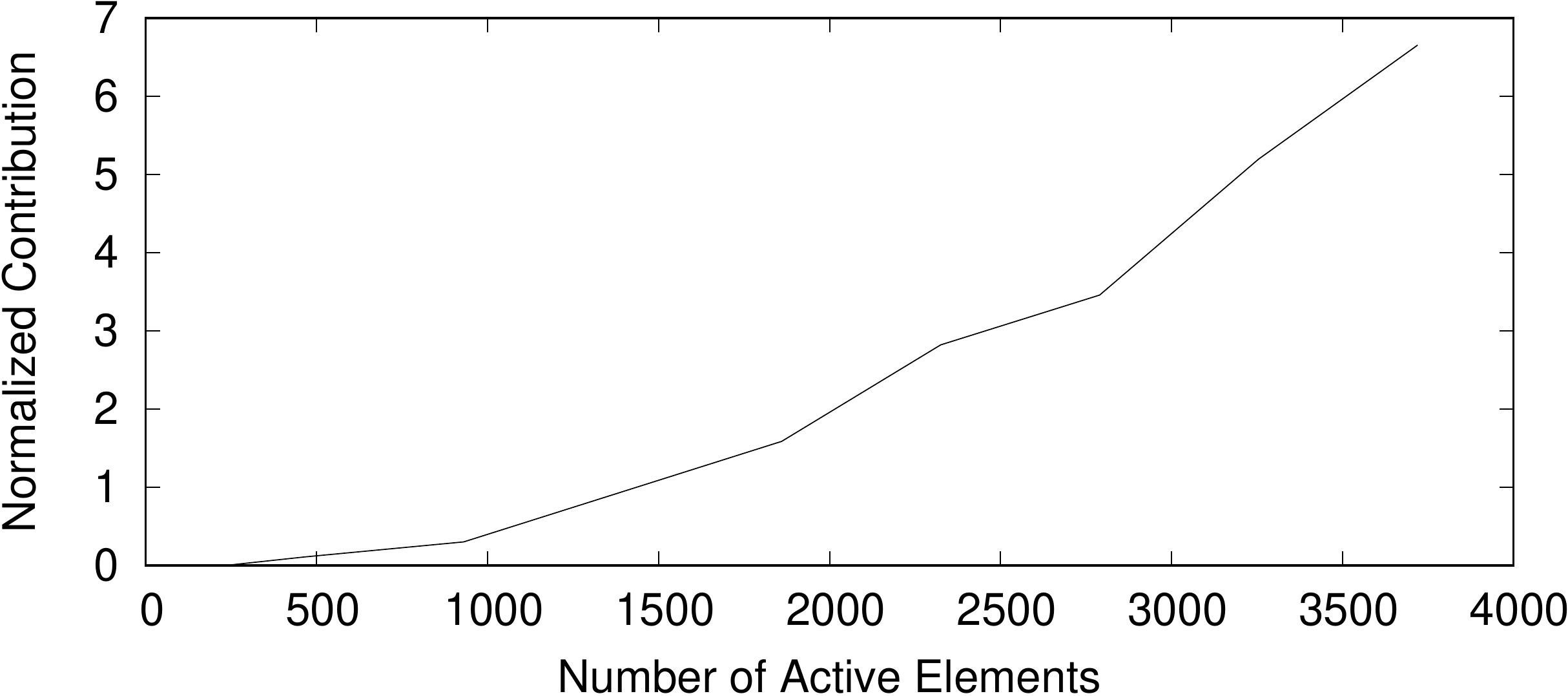}
\caption{The benefit of the \name{} surface grows quadratically with the number of elements. The y-axis shows the contribution to the signal strength of paths going via the \name{} surface.}
\label{fig:quadratic}
\end{figure}
Our model of the system suggests that the signal strength increases quadratically with the number of elements~(\S\ref{s:size-tradeoff}). This is because each element contributes linearly to the channel amplitude, and the signal strength is the sqare of the amplitude. To experimentally verify this phenomenon, we first trained the \name{} surface's configuration for a pair of endpoints. Then we artificially disabled a random subset of elements, and computed the improvement in signal strength, {\em due to the \name{} surface}. That is, we plot the signal strength after discounting the effect of $h_Z$. Figure~\ref{fig:quadratic} demonstrates this quadratic increase for one pair of endpoints, as a function fo the number of active elements.

\subsubsection{Optimization Speed}
\label{s:eval:opt-speed}

\begin{figure}
\includegraphics[width=\linewidth]{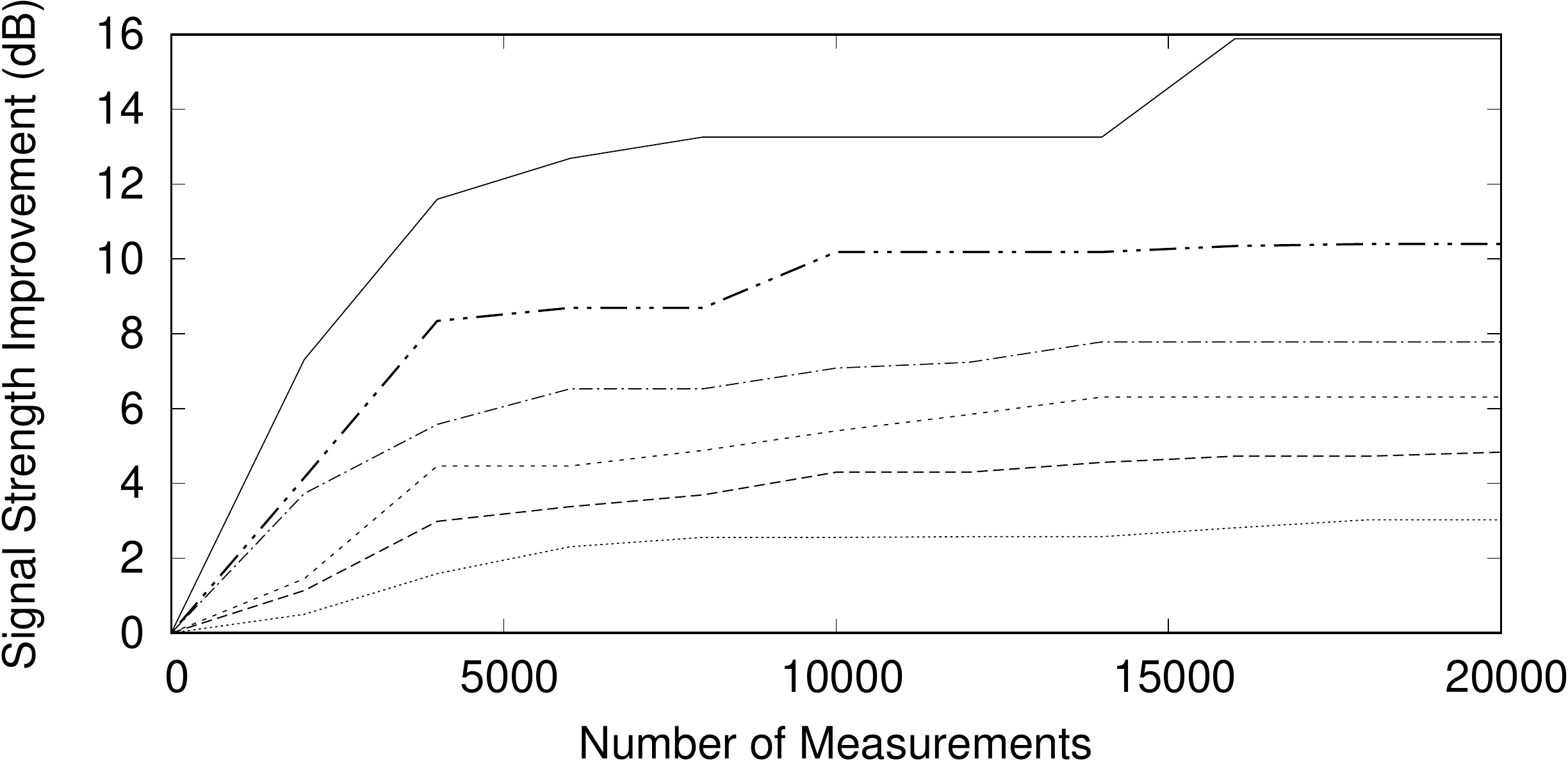}
\caption{The increase in signal strength due to the best learned configuration, as a function of the number of measurements for 6 pairs of endpoints. The increase is computed relative to the channel when the entire \name{} surface is turned off. The pairs span the entire range of performance, from 3 to 15 $dB$. Measurements occur in batches of 2000~(\S\ref{s:algo:rssi-algo}) As we can see, most of the gain comes early on, before the entire surface is fully optimized.}
\label{fig:opt-speed}
\end{figure}

To understand the rate at which the optimization progresses, we plot the signal strength improvement as a function of the number of measurements in Figure~\ref{fig:opt-speed}. As shown, most of the improvement occurs with 4000 measurements. Note, \name{} has 3720 elements, and we'd expect to need at-least 3720 measurements before the problem can be well specified, even ignoring noise. However, the \name{} optimization algorithm can get benefits earlier, since it fixes the state of a reflector, as soon as it is 95\% confident about it. At any point in time, it has a hypothesis state that it believes is optimal.

\section{Conclusion}

This paper presented RFocus, a system that moves beamforming functions from the radio transmitter to the environment. RFocus includes a two-dimensional surface with a rectangular array of simple elements, each of which functions as an RF switch. Each element either lets the signal through or reflects it. The state of the elements is set by a software controller to maximize the signal strength at a receiver, using a majority-voting-based optimization algorithm. The RFocus surface can be manufactured as an inexpensive thin wallpaper, requiring no wiring. Our prototype implementation improves the median signal strength by $10.5\times$, and the median channel capacity by $2.1\times$. 


\paragraph{Human Safety.} Because the \name{} surface doesn't emit any energy of its own, it does not increase the total radiation. It can focus the energy to an area the size of a wavelength, which is as risky as being near the transmitter. \name{} appreciably increases signal strength only near the intended receiver, and not at other locations. Any device's ability to focus energy to an area smaller than a wavelength drops exponentially with distance from it.

\paragraph{Ethics Statement.} This work complies with all applicable ethical standards of our home institution, including (but not limited to) privacy policies and policies on experiments involving humans. No human subjects were involved in this research.
\section{Acknowledgements}
This project was made possible by many interesting discussions with: Dinesh Bharadia, Peter Iannucci, Zach Kabelac, Dina Katabi, Colin Marcus, Vikram Nathan, Hariharan Rahul, Deepak Vasisht and Guo Zhang. We would also like to thank members of the NetMIT lab for letting us borrow their radio equipment.

\bibliographystyle{abbrv}
\bibliography{cite}

\end{document}